\newtheorem{lemma}{Lemma}
\newtheorem{theorem}{Theorem}
\newtheorem{definition}{Definition}
\newtheorem*{remark}{Remark}
\title{From Static to Dynamic Tag Population Estimation: \\ An Extended Kalman Filter Perspective}
\author{Jihong Yu, Lin Chen\thanks{J. Yu and L. Chen are with Lab. Recherche Informatique (LRI-CNRS UMR 8623), Univ. Paris-Sud, 91405 Orsay, France, \{Jihong.Yu, Lin.Chen\}@lri.fr.}}
\begin{document}

\maketitle
\begin{abstract}
Tag population estimation has recently attracted significant research attention due to its paramount importance on a variety of radio frequency identification (RFID) applications. However, most, if not all, of existing estimation mechanisms are proposed for the static case where tag population remains constant during the estimation process, thus leaving the more challenging dynamic case unaddressed, despite the fundamental importance of the latter case on both theoretical analysis and practical application. In order to bridge this gap, 
we devote this paper to designing a generic framework of stable and accurate tag population estimation schemes based on Kalman filter for both static and dynamic RFID systems.
Technically, we first model the dynamics of RFID systems as discrete stochastic processes and leverage the techniques in extended Kalman filter (EKF) and cumulative sum control chart (CUSUM) to estimate tag population for both static and dynamic systems. By employing Lyapunov drift analysis, we mathematically characterise the performance of the proposed framework in terms of estimation accuracy and convergence speed by deriving the closed-form conditions on the design parameters under which our scheme can stabilise around the real population size with bounded relative estimation error that tends to zero with exponential convergence rate.
\end{abstract}

\begin{keywords}
RFID, tag population estimation, extended Kalman filtre, stochastic stability.
\end{keywords}

\section{Introduction}

\subsection{Context and Motivation}
Recent years have witnessed an unprecedented development and application of the radio frequency identification (RFID) technology. As a promising low-cost technology, RFID is widely utilized in various applications ranging from inventory control~\cite{DoD2004}~\cite{DoD2007}, supply chain management~\cite{lee2008supply} to tracking/location \cite{ni2011tracking}~\cite{yang2013localization}. A standard RFID system has two types of devices: a set of RFID tags and one or multiple RFID readers (simply called \textit{tags} and \textit{readers}). A tag is typically a low-cost microchip labeled with a unique serial number (ID) to identify an object. A reader, on the other hand, is equipped with an antenna and can collect the information of tags within its coverage area.

\textit{Tag population estimation and counting} is a fundamental functionality for many RFID applications such as warehouse management, inventory control and tag identification. For example, quickly and accurately estimating the number of tagged objects is crucial in establishing inventory reports for large retailers such as Wal-Mart~\cite{Wal-Mart2005}. Moreover, as the \textit{de facto} MAC layer protocol for RFID systems, the framed-slotted ALOHA protocol~\cite{cha2006DFSA} requires the optimal frame size to be set to the number of tags in the system.


Due to the paramount practical importance of tag population estimation, a large body of studies~\cite{kodialam2007anonymous}~\cite{li2010energy}~\cite{qian2011cardinality}~\cite{shahzad2012everybit} \cite{zheng2013zoe} have been devoted to the design of efficient estimation algorithms. Most of them, as reviewed in Sec.~\ref{sec:related}, are focused on the static scenario where the tag population is constant during the estimation process. However, many practical RFID applications, such as logistic control, are dynamic in the sense that tags may be activated or terminated as specialized in C1G2 standard~\cite{C1G22005}, or they may enter and/or leave the reader's covered area, thus resulting in tag population variation. In such dynamic applications, a fundamental research question is how to design efficient algorithms to dynamically trace the tag population quickly and accurately.


\subsection{Summary of Contributions}

In this paper, we develop a generic framework of stable and accurate tag population estimation schemes for both static and dynamic RFID systems.
By generic, we mean that our framework both supports the real-time monitoring and can estimate the number of tags accurately without any prior knowledge on the tag arrival and departure patterns. Our design is based on the extended Kalman filter (EKF)~\cite{song1992EKF}, a powerful tool in optimal estimation and system control. We also use the estimated tag number to dynamically update the frame size in the framed-slotted ALOHA protocol. By performing Lyapunov drift analysis, we mathematically prove the efficiency and stability of our framework.


The main technical contributions of this paper are articulated as follows. We formulate the system dynamics of the tag population for both static and dynamic RFID systems where the number of tags remains constant and varies during the estimation process.
We design an EKF-based population estimation algorithm for static RFID systems and further enhance it to dynamic RFID systems by leveraging the cumulative sum control chart (CUSUM) to detect the population change. 
By employing Lyapunov drift analysis, we mathematically characterise the performance of the proposed framework in terms of estimation accuracy and convergence speed by deriving the closed-form conditions on the design parameters under which our scheme can stabilise around the real population size with bounded relative estimation error that tends to zero within exponential convergence rate.
To the best of our knowledge, our work is the first theoretical framework that dynamically traces the tag population with closed form conditions on the estimation stability and accuracy.

\section{Related Work}
\label{sec:related}

Due to its fundamental importance, tag population estimation has received significant research attention, which we briefly review in this section.

\subsection{Tag Population Estimation for Static RFID systems}

Most of existing works are focused on the static scenario where the tag population is constant during the estimation process. The central question there is to design efficient algorithms quickly and accurately estimating the static tag population. Kodialam \textit{et al.} design an estimator called PZE which uses the probabilistic properties of empty and collision slots to estimate the tag population size~\cite{kodialam2006mobicom}.
The authors then further enhance PZE by taking the average of the probability of idle slots in multiple frames as an estimator in order to eliminate the constant additive bias~\cite{kodialam2007anonymous}.
Han \textit{et al.} exploit the average number of idle slots before the first non-empty slots to estimate the tag population size~\cite{han2010anonymously}. 
Later, \textit{Qian et al.} develop Lottery-Frame scheme that employs geometrically distributed hash function such that the $j$th slot is chosen with prob. $\frac{1}{2^{j+1}}$~\cite{qian2011cardinality}. As a result, the first idle slot approaches around the logarithm of the tag population and the frame size can be reduced to the logarithm of the tag population, thus reducing the estimation time.
Subsequently, a new estimation scheme called ART is proposed in~\cite{shahzad2012everybit} based on the average length of consecutive non-empty slots. The design rational of ART is that the average length of consecutive non-empty slots is correlated to the tag population. ART is shown to have smaller variance than prior schemes.
More recently, \textit{Zheng et al.} propose another estimation algorithm, ZOE, where each frame just has a single slot and the random variable indicating whether a slot is idle follows Bernoulli distribution~\cite{zheng2013zoe}. The average of multiple individual observations is used to estimate the tag population.

We would like to point out that the above research work does not consider the estimation problem for dynamic RFID systems and thus may fail to monitor the system dynamics in real time. Specifically, in typical static tag population estimation schemes, the final estimation result is the average of the outputs of multi-round executions. When applied to dynamic tag population estimation, additional  estimation error occurs due to the variation of the tag population size during the estimation process.

\subsection{Tag Population Estimation for Dynamic RFID systems}

Only a few propositions have tackled the dynamic scenario.
The works in~\cite{sarangan2008mobile} and~\cite{xie2010mobile} consider specific tag mobility patterns that the tags move along the conveyor in a constant speed, while tags may move in and out by different workers from different positions, so these two algorithm cannot be applicable to generic dynamic scenarios. Subsequently, Xiao \textit{et al.} develop a differential estimation algorithm, ZDE, in dynamic RFID systems to estimate the number of arriving and removed tags~\cite{xiao2013differential}. More recently, they further generalize ZDE by taking into account the snapshots of variable frame sizes~\cite{xiao2015temporally}. Though the algorithms in~\cite{xiao2013differential} and~\cite{xiao2015temporally} can monitor the dynamic RFID systems, they may fail to estimate the tag population size accurately, because they just use the same hash seed in the whole monitoring process. Using the same seed is an effective way in tracing tag departure and arrival. However, using the same seed may significantly limit the estimation accuracy, even in the static case.

Besides the limitations above, prior works do not provide formal analysis on the stability and the convergence rate. To full this vide, we develop a generic framework for tag population estimation in dynamic RFID systems. By generic, we mean that our framework can both support real-time monitoring and estimate the number of tags accurately without the requirement for any prior knowledge on the tag arrival and departure patterns. As another distinguished feature, the efficiency and stability of our framework in the sense of mean square is mathematically established.


\section{Technical Preliminaries}
\label{sec:pre}

In this section, we briefly introduce the extended Kalman filter and some fundamental concepts and results in stochastic process which are useful in the subsequent analysis. The main notations used in the paper are listed in Table~\ref{tab:notation}.

\begin{table}[!htbp]
\centering
\caption{Main Notations}\label{tab:notation}
\begin{tabular}{|l||l|l|l|}
\hline
$z_{k}$& System state in frame $k$: tag population \\
\hline
$y_{k}$& Measurement in frame $k$: idle slot frequency\\
\hline
$\hat z_{k+1|k}$& Priori prediction of $z_{k+1}$\\
\hline
$\hat z_{k|k}$& Posteriori estimate of $z_{k}$\\
\hline
$P_{k+1|k}$& Priori pseudo estimate covariance\\
\hline
$P_{k|k}$& Posteriori pseudo estimate covariance\\
\hline
$v_{k}$& Measurement residual in frame $k$\\
\hline
$K_{k}$& Kalman gain in frame $k$\\
\hline
$Q_{k}$, $R_{k}$& Two tunable parameters in frame $k$\\
\hline
$e_{k|k-1}$& Estimation error in frame ${k}$\\
\hline

\multicolumn{2}{|l|}{2. Defined in Sec.\ref{sec:model and formulation}, \ref{sec:static RFID} and \ref{sec:dynamic RFID}}\\
\hline
$L_{k}$& The length of frame $k$\\
\hline
$Rs_{k}$& Random seed in frame ${k}$\\
\hline
$h(\cdot)$& Hash function\\
\hline
$N_{k}$& The number of idle slots in frame ${k}$\\
\hline
$\rho$& Reader load factor\\
\hline
$p(z_k)$& Probability of an idle slot in frame ${k}$\\
\hline
$u_{k}$& Gaussian random variable\\
\hline
$Var[u_{k}]$& Variance of $u_{k}$\\
\hline
$\phi_k$& Controllable parameter\\
\hline
$w_k$& Random variable: variation of tag population \\
\hline
$\Phi_k$& Normalization of $v_{k}$\\
\hline
$g^{+}_k$, $g^{-}_k$& CUSUM statistics\\
\hline
$\theta$, $\Upsilon_k$& CUSUM threshold and reference value\\
\hline
\multicolumn{2}{|l|}{3. Defined in Sec.\ref{sec:analysis}}\\
\hline
$\epsilon$& Upper bound of initial estimation error\\
\hline
$\lambda_k$, $\delta_k$& Upper bounds of $E[w_k]$ and $E[w^2_k]$\\
\hline
\end{tabular}
\end{table}

\subsection{Extended Kalman Filter}

The extended Kalman filter is a powerful tool to estimate system state in nonlinear discrete-time systems. Formally, a nonlinear discrete-time system can be described as follows:
\begin{align}
z_{k+1} &= f(z_k, x_k) + w^*_k \\
y_k &= h(z_k) + u^*_k,
\end{align}
where $z_{k+1} \in \mathbb{R}^n$ denotes the state of the system, $x_k \in \mathbb{R}^d$ is the controlled inputs and $y_k \in \mathbb{R}^m$ stands for the measurement observed from the system. The uncorrelated stochastic variables $w^*_k \in \mathbb{R}^s$ and $u^*_k \in \mathbb{R}^t$ denote the process noise and the measurement noise, respectively. The functions $f$ and $h$ are assumed to be the continuously differentiable.

For the above system, we introduce an EKF-based state estimator given in Definition~\ref{Def:EKF}.

\begin{definition}[Extended Kalman filter~\cite{song1992EKF}]
A two-step discrete-time extended Kalman filter consists of state prediction and measurement update, defined as

1) Time update (prediction)
\begin{align}
\hat{z}_{k+1 | k} &= f(\hat{z}_{k | k},x_k) \label{Eq:Tz_up} \\
P_{k+1 | k} &= P_{k | k} + Q_k , \label{Eq:Tp_up}
\end{align}

2) Measurement update (correction)
\begin{align}
\hat{z}_{k+1 | k+1} &= f(\hat{z}_{k+1 | k},x_k) + K_{k+1} v_{k+1} \label{Eq:Mz_up}\\
P_{k+1 | k+1} &= P_{k+1 | k} \left( 1- K_{k+1} C_{k+1} \right) \label{Eq:Mp_up}\\
K_{k+1} &= \frac{P_{k+1 | k} C_{k+1}}{P_{k+1 | k} {C_{k+1}}^2 + R_{k+1} }, \label{Eq:gain}
\end{align}
where
\begin{align}
v_{k+1} &= y_{k+1} - h(\hat{z}_{k+1|k}) \label{Eq:innovation}\\
C_{k+1} &= \frac{\partial h({z}_{k+1})}{\partial {z}_{k+1}} \bigg |_{{z}_{k+1} = \hat{z}_{k+1|k} }.
\label{Eq:dh}
\end{align}
\label{Def:EKF}
\end{definition}

\begin{remark}
In the above definition of extended Kalman filter, the parameters can be interpreted in our context as follows:
\begin{itemize}
\item $\hat{z}_{k+1 | k}$ is the prediction of ${z}_{k+1}$ at the beginning of frame $k+1$ given by the previous state estimate, while $\hat{z}_{k+1 | k+1}$ is the estimate of ${z}_{k+1}$ after the adjustment based on the measure at the end of frame $k+1$.
\item $v_{k+1}$, referred to as innovation, is the measurement residual in frame $k$$+$$1$. It represents the estimated error of the measure.
\item $K_{k+1}$ is the Kalman gain. With reference to equation~\eqref{Eq:Mz_up}, it weighs the innovation $v_{k+1}$ w.r.t. $f(\hat{z}_{k+1 | k},x_k)$.
\item $P_{k+1|k}$ and $P_{k+1 | k+1}$, in contrast to the linear case, are not equal to the covariance of estimation error of the system state. In this paper, we will refer to them as pseudo-covariance.
\item $Q_k$ and $R_k$ are two tunable parameters which play the role as that of the covariance of the process and measurement noises in linear stochastic systems to achieve optimal filtering in the maximum likelihood sense.
    We will show later that $Q_k$ and $R_k$ also play an important role in improving the stability and convergence of our EKF-based estimators.
\end{itemize}
\end{remark}

\subsection{Boundedness of Stochastic Process}

In order to analyse the stability of an estimation algorithm, we need to check the boundedness of the estimation error defined as follows:
\begin{equation}
e_{k|k-1} \triangleq z_{k} - \hat{z}_{k|k-1}.
\label{Eq:error}
\end{equation}

We further introduce the following two mathematical definitions~\cite{morozan1972boundedness}~\cite{tarn1976observers} on the boundedness of stochastic process followed by stochastic stability results~\cite{reif1999stochastic}.

\begin{definition}[Boundedness of Random Variable]
The stochastic process $e_{k|k-1}$ is said to be bounded with probability one (w.p.o.), if there exists $X>0$ such that
\begin{equation}
\lim_{k \to \infty} \sup_{k \ge 1} \mathbb P\{|e_{k|k-1}|>X \} =0.
\end{equation}
\label{Def:boundedness}
\end{definition}

\begin{definition}[Boundedness in Mean Square]
The stochastic process $e_{k|k-1}$ is said to be exponentially bounded in the mean square with exponent $\zeta$, if there exist real numbers $\psi_1$, $\psi_2>0$ and $0<\zeta<1$ such that
\begin{equation}
E[e_{k|k-1}^2] \le \psi_1 e_{1|0}^2 {\zeta}^{k-1} + \psi_2.
\end{equation}
\label{Def:stability}
\end{definition}

\begin{lemma}
Given a stochastic process $V_k (e_{k|k-1})$ and real numbers $\uline \beta$, $\overline\beta$, $\tau$$>$$0$ and $0$$<$$\alpha$$\le$$1$ with the following properties:
\begin{equation}
  \uline \beta e_{k|k-1}^2 \le V_k (e_{k|k-1}) \le \overline\beta e_{k|k-1}^2 ,
\end{equation}
\begin{equation}
  E[V_{k+1} (e_{k+1|k}) | e_{k|k-1}] - V_k (e_{k|k-1}) \le -\alpha V_k (e_{k|k-1}) + \tau ,
\end{equation}
then for any $k \ge 1$ it holds that
\begin{itemize}
\item the stochastic process $e_{k|k-1}$ is exponentially bounded in the mean square, i.e.,
\begin{eqnarray}
E[e_{k|k-1}^2]
&\le& \frac{\overline\beta}{\uline \beta} E[e_{1|0}^2] (1-\alpha)^{k-1}
                     + \frac{\tau}{\uline \beta} \sum_{j=1}^{k-2} {(1-\alpha)^j} \nonumber\\
&\le& \frac{\overline\beta}{\uline \beta} E[e_{1|0}^2] (1-\alpha)^{k-1}
                     + \frac{\tau}{\uline \beta \alpha},
\label{Eq:stb_ineq}
\end{eqnarray}
\item the stochastic process $e_{k|k-1}$ is bounded w.p.o..
\end{itemize}
\label{Lem:stability}
\end{lemma}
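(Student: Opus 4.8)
The plan is to turn the one-step conditional drift hypothesis into an unconditional linear recursion for $E[V_k]$, solve that recursion in closed form, transfer the resulting bound back to $E[e_{k|k-1}^2]$ through the quadratic sandwich inequalities, and finally deduce boundedness with probability one from the mean-square bound via a Markov-inequality argument.

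First I would take total expectations of the drift hypothesis. Applying the tower property $E\!\left[\,E[V_{k+1}(e_{k+1|k})\mid e_{k|k-1}]\,\right] = E[V_{k+1}(e_{k+1|k})]$ to the assumed inequality
\begin{equation*}
E[V_{k+1}(e_{k+1|k})\mid e_{k|k-1}] - V_k(e_{k|k-1}) \le -\alpha V_k(e_{k|k-1}) + \tau
\end{equation*}
yields the scalar recursion $E[V_{k+1}] \le (1-\alpha)E[V_k] + \tau$. Since $0<\alpha\le 1$, the multiplier satisfies $0\le 1-\alpha<1$, so iterating from $k=1$ gives
\begin{equation*}
E[V_k] \le (1-\alpha)^{k-1}E[V_1] + \tau\sum_{j=0}^{k-2}(1-\alpha)^j.
\end{equation*}
This de-conditioning step is the crux of the argument: the conditional drift cannot be iterated directly, and one must pass to the unconditional expectation before unrolling, keeping careful track of the index offset so that the geometric sum lines up with the stated $\sum_{j}(1-\alpha)^j$ term.

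Next I would invoke the sandwich bounds. The lower bound $\uline\beta\, e_{k|k-1}^2 \le V_k$ gives $E[e_{k|k-1}^2]\le E[V_k]/\uline\beta$, while the upper bound evaluated at $k=1$ gives $E[V_1]\le \overline\beta\, E[e_{1|0}^2]$. Substituting both into the solved recursion produces
\begin{equation*}
E[e_{k|k-1}^2] \le \frac{\overline\beta}{\uline\beta}E[e_{1|0}^2](1-\alpha)^{k-1} + \frac{\tau}{\uline\beta}\sum_{j=0}^{k-2}(1-\alpha)^j,
\end{equation*}
and bounding the finite geometric series by its infinite sum $\sum_{j=0}^{\infty}(1-\alpha)^j = 1/\alpha$ delivers the cleaner residual term $\tau/(\uline\beta\,\alpha)$. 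This establishes the exponential bound of Definition~\ref{Def:stability} with $\psi_1 = \overline\beta/\uline\beta$, $\psi_2 = \tau/(\uline\beta\,\alpha)$ and exponent $\zeta = 1-\alpha$.

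Finally, for boundedness with probability one I would extract a uniform second-moment bound from the first part: since $(1-\alpha)^{k-1}\le 1$, we obtain $\sup_{k\ge 1}E[e_{k|k-1}^2] \le \frac{\overline\beta}{\uline\beta}E[e_{1|0}^2] + \frac{\tau}{\uline\beta\,\alpha} =: M <\infty$. Markov's inequality applied to the nonnegative variable $e_{k|k-1}^2$ then gives $\mathbb P\{|e_{k|k-1}|>X\} = \mathbb P\{e_{k|k-1}^2>X^2\}\le M/X^2$ uniformly in $k$, so that $\sup_{k\ge 1}\mathbb P\{|e_{k|k-1}|>X\}\to 0$ as $X\to\infty$, which is precisely Definition~\ref{Def:boundedness}. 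The only genuine subtlety in the whole proof is the first transition via iterated expectation; once the recursion is unconditional, the remainder is a routine geometric-series computation and a single application of Markov's inequality.
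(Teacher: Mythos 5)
Your proof is correct. There is, however, no in-paper proof to compare it against: the paper states this lemma as a quoted result, citing \cite{reif1999stochastic} (which in turn traces back to results of the type in \cite{tarn1976observers}), and never reproduces the argument. Your route --- de-conditioning the drift inequality via the tower property to get $E[V_{k+1}] \le (1-\alpha)E[V_k] + \tau$, unrolling this linear recursion, transferring to $E[e_{k|k-1}^2]$ through the two sandwich inequalities (the lower one at time $k$, the upper one at time $1$), dominating the geometric sum by $1/\alpha$, and finishing the w.p.o.\ claim with a uniform second-moment bound plus Markov's inequality --- is precisely the standard proof of this result in that literature, so in substance you have recovered the proof the paper delegates to its references.

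Two minor observations. First, your unrolled recursion produces $\sum_{j=0}^{k-2}(1-\alpha)^j$, whereas the lemma's first display starts the sum at $j=1$; the recursion genuinely yields the $j=0$ term, so your indexing is the self-consistent one (the paper's lower limit looks like an index slip), and since both sums are at most $\sum_{j\ge 0}(1-\alpha)^j = 1/\alpha$, the final bound $\tau/(\uline\beta\,\alpha)$ is unaffected either way. Second, Definition~\ref{Def:boundedness} as literally written ($\lim_{k\to\infty}\sup_{k\ge1}$) is garbled; your reading --- that $\sup_{k\ge1}\mathbb{P}\{|e_{k|k-1}|>X\}$ becomes arbitrarily small as $X\to\infty$ --- is the intended one, and your Markov step $\mathbb{P}\{|e_{k|k-1}|>X\}\le M/X^2$ with $M = \frac{\overline\beta}{\uline\beta}E[e_{1|0}^2]+\frac{\tau}{\uline\beta\,\alpha}$ establishes exactly that.
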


It can be noted that Lemma~\ref{Lem:stability} can only be implemented offline. To address this limit, we adjust Lemma~\ref{Lem:stability} to an online version with time-varying parameters, which can be proven by the same method as in~\cite{tarn1976observers,rhudy2013online}.

\begin{lemma}
If there exist a stochastic process $V_k (e_{k|k-1})$ and parameters $\beta^*$, $\beta_k$, $\tau_k$$>$$0$ and $0$$<$$\alpha^*_k$$\le$$1$ with the following properties:
\begin{equation}
V_1(e_{1|0}) \le \beta^* {e}_{1|0}^2,
\label{Eq:bou_v0}
\end{equation}
\begin{equation}
\beta_k {e}_{k|k-1}^2 \le V_k (e_{k|k-1}),
\label{Eq:bou_vk}
\end{equation}
\begin{equation}
E[V_{k+1} (e_{k+1|k}) | e_{k|k-1}] - V_k (e_{k|k-1}) \le -\alpha^*_k V_k (e_{k|k-1}) + \tau_k;
\label{Eq:bou_Ev}
\end{equation}
then for any $k \ge 1$ it holds that
\begin{itemize}
\item the stochastic process $e_{k|k-1}$ is exponentially bounded in the mean square, i.e.,
\begin{multline}
E[e_{k|k-1}^2] \le \frac{\beta^*}{\beta_k} E[{e_{1|0}}^2]\prod_{i=1}^{k-1} {(1-\alpha^*_{i})} \\
                     + \frac{1}{\beta_k} \sum_{i=1}^{k-2} \tau_{k-i-1}\prod_{j=1}^{i}{(1-\alpha^*_{k-j})},
\end{multline}
\item the stochastic process $e_{k|k-1}$ is bounded w.p.o..
\end{itemize}
\label{Lem:Mstability}
\end{lemma}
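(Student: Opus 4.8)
The plan is to mimic the proof of the constant-parameter version (Lemma~\ref{Lem:stability}) and the stochastic-stability arguments of~\cite{tarn1976observers,rhudy2013online}, the only genuinely new ingredient being careful bookkeeping of the time-varying coefficients. The central observation is that the conditional drift inequality~\eqref{Eq:bou_Ev} can be converted into an \emph{unconditional} scalar recursion for the expected Lyapunov value. Writing $m_k \triangleq E[V_k(e_{k|k-1})]$ and taking the total expectation of~\eqref{Eq:bou_Ev} via the tower property $E\big[E[\,\cdot\mid e_{k|k-1}]\big]=E[\cdot]$, I would obtain
\begin{equation}
m_{k+1} \le (1-\alpha^*_k)\,m_k + \tau_k, \qquad k \ge 1.
\end{equation}
This is a linear first-order recursion with time-dependent gain $1-\alpha^*_k \in [0,1)$ and forcing term $\tau_k$.

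Second, I would unroll this recursion by induction on $k$. Multiplying out the successive contraction factors $1-\alpha^*_i$ yields
\begin{equation}
m_k \le \Big(\prod_{i=1}^{k-1}(1-\alpha^*_i)\Big) m_1 + \sum_{i=1}^{k-1}\tau_i\prod_{j=i+1}^{k-1}(1-\alpha^*_j),
\end{equation}
which, after the substitution that reverses the order of the indices, reproduces the product/sum structure appearing in the statement. This step is essentially algebraic; the only care required is the index management of the nested products, exactly analogous to the scalar geometric sum $\sum_{j}(1-\alpha)^j$ of Lemma~\ref{Lem:stability}.

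Third, I would convert the bound on $m_k$ into a bound on $E[e_{k|k-1}^2]$. The lower bound~\eqref{Eq:bou_vk}, namely $\beta_k e_{k|k-1}^2 \le V_k(e_{k|k-1})$, gives $E[e_{k|k-1}^2] \le m_k/\beta_k$ upon taking expectations, while the initial upper bound~\eqref{Eq:bou_v0} gives $m_1 = E[V_1(e_{1|0})] \le \beta^* E[e_{1|0}^2]$. Substituting these two sandwiching estimates into the unrolled recursion yields precisely the claimed inequality, establishing exponential boundedness in the mean square in the sense of Definition~\ref{Def:stability}. Boundedness with probability one (Definition~\ref{Def:boundedness}) then follows from the mean-square bound by a Chebyshev/Markov argument: since $\mathbb P\{|e_{k|k-1}|>X\} \le E[e_{k|k-1}^2]/X^2$, once $\sup_k E[e_{k|k-1}^2]$ is finite the right-hand side can be made arbitrarily small uniformly in $k$ by choosing $X$ large.

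The main obstacle I anticipate is not any single deep step but rather guaranteeing that the forcing sum $\sum_i \tau_i \prod_j (1-\alpha^*_j)$ stays bounded as $k\to\infty$, since with genuinely time-varying parameters this is no longer a convergent geometric series as in the constant-coefficient case. Concluding boundedness w.p.o. therefore implicitly requires that the design keeps the contraction factors $\alpha^*_k$ bounded away from $0$ and the perturbations $\tau_k$ bounded above; verifying that the EKF-based estimator of the subsequent sections indeed enforces such uniform bounds on $\alpha^*_k$, $\beta_k$ and $\tau_k$ is where the real effort will be concentrated when the lemma is actually invoked.
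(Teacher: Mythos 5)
Your proposal is correct and follows essentially the same route the paper intends: the paper offers no explicit proof of Lemma~\ref{Lem:Mstability}, deferring to the standard Lyapunov-drift argument of its references, and your chain --- tower property to get the scalar recursion $m_{k+1}\le(1-\alpha^*_k)m_k+\tau_k$ for $m_k \triangleq E[V_k(e_{k|k-1})]$, unrolling by induction, sandwiching via \eqref{Eq:bou_v0} and \eqref{Eq:bou_vk}, and Chebyshev for the w.p.o.\ part --- is exactly that method.

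One caveat, which reflects a defect in the paper's statement rather than in your argument. Your unrolled bound $m_k \le \prod_{i=1}^{k-1}(1-\alpha^*_i)\,m_1 + \sum_{i=1}^{k-1}\tau_i\prod_{j=i+1}^{k-1}(1-\alpha^*_j)$, after reversing the summation index, has forcing sum $\sum_{i=0}^{k-2}\tau_{k-1-i}\prod_{j=1}^{i}(1-\alpha^*_{k-j})$, whose $i=0$ summand is $\tau_{k-1}$; the lemma's displayed inequality starts this sum at $i=1$ and therefore omits that term. Thus what you actually derive is the stated right-hand side \emph{plus} $\tau_{k-1}/\beta_k$, and your remark that the reindexing ``reproduces'' the stated structure silently discards this term. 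The omission is an off-by-one slip in the paper (the same slip appears in Lemma~\ref{Lem:stability}, whose geometric sum starts at $j=1$ rather than $j=0$); as printed, the bound is even falsifiable: take $e_{1|0}=0$ a.s., $V_k(e)=e^2$, $\beta^*=\beta_k=1$, $\alpha^*_k=1$, $\tau_k=\tau$ and $e_{k+1|k}=\sqrt{\tau}$ a.s., so that all three hypotheses hold, yet $E[e_{2|1}^2]=\tau>0$ while the displayed bound would force $E[e_{2|1}^2]\le 0$. So state your sum from $i=0$ rather than claiming an exact match; nothing downstream changes, since the extra term is of the same order as those retained. Your final observation is also well taken: the w.p.o.\ conclusion via Chebyshev needs $\sup_k E[e_{k|k-1}^2]<\infty$, which the time-varying hypotheses alone do not guarantee; it requires $\alpha^*_k$ bounded away from zero, $\tau_k$ bounded above and $\beta_k$ bounded away from zero --- exactly the uniformity the paper later secures through \eqref{Eq:bound_Q}, \eqref{Eq:bound_phi} and Lemma~\ref{Lem:bound_P}.
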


\begin{remark}
The conditions in Lemma~\ref{Lem:Mstability} can be interpreted from the following two angles:
\begin{enumerate}
\item the bounds of $V_k(e_{k|k-1})$, i.e,~\eqref{Eq:bou_v0},~\eqref{Eq:bou_vk},
\item the bounds of the drift of $V_k(e_{k|k-1})$, i.e,~\eqref{Eq:bou_Ev}.
\end{enumerate}
\end{remark}

\section{System Model and Problem Formulation}
\label{sec:model and formulation}


\subsection{System Model}

Consider a RFID system consisting of a reader and a mass of tags operating on one frequency channel. The number of tags is unknown \textit{a priori} and can be constant or dynamic (time-varying), which we refer to as \textit{static} and \textit{dynamic} systems, respectively throughout the paper. The MAC protocol for the RFID system is the standard framed-slotted ALOHA protocol, where the standard \textit{Listen-before-Talk} mechanism is employed by the tags to respond the reader's interrogation~\cite{Finkenzelle2000}.

The reader initiates a series of frames indexed by an integer $k \in \mathbb{Z_+}$. Each individual frame, referred to as a round, consists of a number of slots. The reader starts frame $k$ by broadcasting a \texttt{begin-round} command with frame size $L_k$ and a random seed $Rs_k$. The frame size is the number of available slots for tags to choose in a round. We adopt a dynamic framed-slotted ALOHA protocol where the frame size $L_k$ is set to the estimated number of tags at the start of $k$th frame based on a tag population estimation scheme, as will be described later. It is well-known that such setting maximises the protocol efficiency.
When a tag receives a \texttt{begin-round} command, it uses a hash function $h(\cdot)$, $L_k$, $Rs_k$, and its ID to generate a random number $i$ in the range $[0, L_k -1]$ and reply in slot $i$ of frame $k$.\footnote{The outputs of the hash function have a uniform distribution such that the tag can choose any slot within the round with the equal probability.}

Since every tag picks its own response slot individually, there may be zero, one, or more than one tags transmitting in a slot, which are referred to as \textit{idle}, \textit{singleton}, and \textit{collision} slots, respectively. The reader is not assumed to be able to distinguish between a singleton or a collision slot, but it can detect an idle slot. We term both singleton and collision slots as \textit{occupied} slots throughout the paper. By collecting all replies in a frame, the reader can generate a bit-string $B_k$ illustrated as $B_k=\{\cdots | 0 | 0 | 1 | 0 | 1 | 1 | \cdots \}$, where `0' indicates an idle slot, and `1' stands for an occupied one.


Subsequently, the reader finalizes the current frame by sending an \texttt{end round} command. Based on the number of idle slots, i.e., the number of `0' in $B_k$, the reader runs the estimation algorithm, detailed in the next section, to trace the tag population and updates the frame size for the next frame $k+1$ before starting the next round.

\subsection{Tag Population Estimation Problem}

Our objective is to design a stable and accurate tag population estimation algorithm. By stable and accurate we mean that
\begin{itemize}
\item the estimation error of our algorithm is bounded in mean square in the sense of Definition~\ref{Def:stability} and the relative estimation error tends to zero;
\item the estimated population size converges to the real value with exponential rate.
\end{itemize}

Mathematically, we consider a large-scale RFID system of a reader and a set of tags with the unknown size $z_k$ in frame $k$ which can be static or dynamic. Denote by $\hat{z}_{k|k-1}$ the prior estimate of $z_k$ in the beginning of frame $k$, the reader sets $L_k = \hat{z}_{k|k-1}$ for frame $k$. At the end of frame $k$, the reader updates the estimate $\hat{z}_{k|k-1}$ to $\hat{z}_{k|k}$ by running the estimation algorithm. Our designed estimation scheme need to guarantee the following properties:
\begin{itemize}
\item $\displaystyle \lim_{z_k\rightarrow\infty} \left|\frac{\hat{z}_{k|k-1}-z_k}{z_k}\right|=0$;
\item the converges rate is exponential.
\end{itemize}

\section{Tag Population Estimation: Static Systems}
\label{sec:static RFID}

In this section, we focus on the baseline scenario of static systems where the tag population is constant during the estimation process. We first establish the discrete-time model for the system dynamics and the measurement model using the bit string $B_k$ observed during frame $k$. We then present our EKF-based estimation algorithm.


\subsection{System Dynamics and Measurement Model}


Consider the static RFID systems where the tag population stays constant, the system state evolves as
\begin{equation}
z_{k+1} = z_{k},
\label{Eq:in_system_state}
\end{equation}
meaning that the number of tags $z_{k+1}$ in the system in frame $k+1$ equals that in frame $k$.


In order to estimate $z_k$, we leverage the measurement on the number of idle slots during a frame.
To start, we study the stochastic characteristics of the number of idle slots.

Assume that the initial tag population $z_0$ falls in the interval $z_0 \in [ \uline {z}_0, \overline {z}_0]$, yet the exact value of $z_0$ is unknown and should be estimated. The range $[\uline {z}_0, \overline {z}_0]$ can be a very coarse estimation that can be obtained by any existing population estimation method.
Recall the system model that in frame $k$, the reader probes the tags with the frame size $L_k$. Denote by variable $N_k$ the number of idle slots in frame $k$, that is, the number of `0's in $B_k$, we have the following results on $N_k$ according to~\cite{Kolchin1978allocation,kodialam2006mobicom}.

\begin{lemma}
\label{Lem:Normal}
If each tag replies in a random slot among the $L_k$ slots, then it holds that $N_k \sim {\cal N}[ \mu, \sigma^2]$ for large $L_k$ and $z_k$, where
$\mu = L_k (1- \frac{1}{L_k})^{z_k}$
and
$\sigma^2 = L_k (L_k -1)(1- \frac{2}{L_k})^{z_k} + L_k (1- \frac{1}{L_k})^{z_k} - {L_k}^2 (1- \frac{1}{L_k})^{2 z_k}$.
\end{lemma}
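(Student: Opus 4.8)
The plan is to realise $N_k$ as a sum of slot-indicator variables and to establish the three assertions separately: the exact value of $\mu$, the exact value of $\sigma^2$, and the Gaussian limit. Label the slots $1,\dots,L_k$ and let $X_i$ be the indicator that slot $i$ is idle, so that $N_k=\sum_{i=1}^{L_k}X_i$. Since the $z_k$ tags pick their slots independently and uniformly, slot $i$ remains idle exactly when every tag avoids it, so $E[X_i]=(1-\frac{1}{L_k})^{z_k}$. Writing $p\triangleq(1-\frac{1}{L_k})^{z_k}$ and applying linearity of expectation gives $E[N_k]=L_k p=\mu$ with no approximation required.

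For the variance I would exploit the covariance structure of the $X_i$. Each indicator is Bernoulli, so $\mathrm{Var}(X_i)=p-p^2$. For distinct slots $i\neq j$, both are idle precisely when every tag avoids both, giving $E[X_iX_j]=(1-\frac{2}{L_k})^{z_k}$ and hence $\mathrm{Cov}(X_i,X_j)=(1-\frac{2}{L_k})^{z_k}-p^2$. Summing the $L_k$ diagonal terms and the $L_k(L_k-1)$ off-diagonal pairs, and collecting the two $p^2$ contributions into $-L_k^2 p^2=-L_k^2(1-\frac{1}{L_k})^{2z_k}$, reproduces the claimed $\sigma^2=L_k(L_k-1)(1-\frac{2}{L_k})^{z_k}+L_k(1-\frac{1}{L_k})^{z_k}-L_k^2(1-\frac{1}{L_k})^{2z_k}$. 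Both this computation and the mean above are routine bookkeeping and hold for every $L_k$ and $z_k$; the hypothesis of large $L_k$ and $z_k$ enters only in the distributional claim.

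The substantive part, and the step I expect to be the main obstacle, is the Gaussian limit, because the $X_i$ are dependent: they are negatively correlated through the global constraint that the tags' choices are spread over the slots, so no CLT for independent summands applies directly. I would handle this by Poissonization: replacing the fixed count $z_k$ by a Poisson$(z_k)$ number of tags makes the per-slot occupancy counts independent Poisson variables, hence the idle indicators become independent across slots, and the Lindeberg--Feller CLT applies to $\sum_i X_i$ once one verifies that the per-slot variances stay nondegenerate in the regime where $z_k/L_k$ is bounded away from $0$ and $\infty$; a de-Poissonization step then transfers the limit back to the fixed-$z_k$ allocation model. Equivalently, and more economically, this is exactly the classical asymptotic-normality theorem for the number of empty cells in the equiprobable random allocation scheme, which is the setting of the cited references, so I would invoke that result to conclude $N_k\sim{\cal N}[\mu,\sigma^2]$ for large $L_k$ and $z_k$.
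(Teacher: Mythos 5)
Your proposal is correct, and it is worth noting that the paper itself offers no proof of this lemma at all: it simply asserts the result ``according to'' the cited references (Kolchin's classical monograph on random allocations and Kodialam--Nandagopal), which is precisely the classical theorem on the number of empty cells in the equiprobable allocation scheme that you identify in your last paragraph. What you add beyond the paper is the explicit verification that the stated $\mu$ and $\sigma^2$ are the \emph{exact} mean and variance for every $L_k$ and $z_k$ (your indicator/covariance computation checks out: $L_k(p-p^2)+L_k(L_k-1)\bigl[(1-\tfrac{2}{L_k})^{z_k}-p^2\bigr]$ collects to exactly the claimed expression), together with the correct observation that the large-$L_k$, large-$z_k$ hypothesis is needed only for the distributional claim. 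Your Poissonization route to the CLT is the standard way to prove that classical theorem and is sound in the regime where $z_k/L_k$ stays bounded away from $0$ and $\infty$ (which is the operating regime of the protocol, since $L_k$ is set to the population estimate); the de-Poissonization step is the only part you leave genuinely unexecuted, but since you also offer the legitimate alternative of invoking the allocation-scheme theorem directly -- exactly what the authors do -- there is no gap in the argument as a whole. In short: where the paper cites, you derive, and the two are consistent.
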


\begin{lemma}
\label{Lem:app_exp}
For any $\epsilon^* >0$, there exists some $M>0$, such that if $z_k \ge M$ or $L_k =\hat{z}_{k|k-1} \ge M$, then it holds that
\begin{align}
\big|\mu - L_k e^{- \rho} \big| \le \epsilon^* , \label{Eq:app_1}\\
\big|\sigma^2 - L_k (e^{- \rho} - (1+ \rho)e^{- 2\rho}) \big| \le \epsilon^*, \label{Eq:app_2}
\end{align}
where $\rho = \frac{z_k}{L_k}$ is referred to as the reader load factor.
\end{lemma}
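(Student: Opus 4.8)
The plan is to reduce both bounds to a single asymptotic estimate for powers of the form $(1-c/L_k)^{z_k}$. Writing such a power as $\exp\!\big(z_k\ln(1-c/L_k)\big)$ and Taylor-expanding the logarithm yields
\begin{equation}
z_k\ln\!\Big(1-\frac{c}{L_k}\Big)=-c\rho-\frac{c^2\rho}{2L_k}-\frac{c^3\rho}{3L_k^2}-\cdots=-c\rho+r_k(c),
\end{equation}
where the remainder $r_k(c)=O(\rho/L_k)$ and vanishes as $L_k\to\infty$. Hence $(1-c/L_k)^{z_k}=e^{-c\rho}e^{r_k(c)}$, and the whole argument becomes a matter of exponentiating this expansion and controlling $e^{r_k(c)}-1$ through the elementary bound $|e^{x}-1|\le |x|e^{|x|}$.

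For~\eqref{Eq:app_1} I would take $c=1$ and write $\mu-L_ke^{-\rho}=L_ke^{-\rho}\big(e^{r_k(1)}-1\big)$. The factor $L_ke^{-\rho}|r_k(1)|$ is governed by $\rho e^{-\rho}$, which is uniformly bounded (by $e^{-1}$) and is itself driven to zero once $L_k$ is large; the two hypotheses of the lemma cooperate here, since a large $L_k$ shrinks $r_k(1)$ while a large $\rho$ (large $z_k$ relative to $L_k$) shrinks $L_ke^{-\rho}$ directly. Choosing $M$ large enough makes the residual at most $\epsilon^*$.

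The variance bound~\eqref{Eq:app_2} is the crux. Its last term equals $\mu$ and is handled as above, but the first two terms $L_k(L_k-1)(1-2/L_k)^{z_k}$ and $-L_k^2(1-1/L_k)^{2z_k}$ are each of order $L_k^2$, so neither can be approximated separately to within a fixed $\epsilon^*$; they must be combined so that their leading $L_k^2e^{-2\rho}$ parts cancel. Using the expansion above with $c=2$ and $c=1$ (doubled exponent),
\begin{equation}
(1-2/L_k)^{z_k}=e^{-2\rho}\Big(1-\tfrac{2\rho}{L_k}+O(L_k^{-2})\Big),\qquad
(1-1/L_k)^{2z_k}=e^{-2\rho}\Big(1-\tfrac{\rho}{L_k}+O(L_k^{-2})\Big),
\end{equation}
I would keep the first-order $1/L_k$ corrections precisely because they are what survive the cancellation: collecting the order-$L_k$ coefficients in $L_k(L_k-1)(1-2\rho/L_k)-L_k^2(1-\rho/L_k)$ gives $-(1+\rho)L_k$, which after multiplication by $e^{-2\rho}$ and addition of the $L_ke^{-\rho}$ from $\mu$ reproduces exactly $L_k\big(e^{-\rho}-(1+\rho)e^{-2\rho}\big)$.

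The main obstacle is making this cancellation rigorous rather than formal: one must confirm that the $O(L_k^{-2})$ remainders, when multiplied back by the $L_k^2$ prefactors, remain bounded and small uniformly, so that the quantity left after the $L_k^2$ terms cancel is genuinely the order-$L_k$ term and not polluted by a divergent tail. I would do this by factoring $e^{-2\rho}$ out of both terms, writing the bracketed factors as $1+a_k$ and $1+b_k$ with $a_k,b_k=O(\rho/L_k)$ made explicit from the logarithmic expansion, and tracking $L_k(L_k-1)(1+a_k)-L_k^2(1+b_k)$ term by term with the same $|e^x-1|\le|x|e^{|x|}$ estimate used for $\mu$. Finally I would set $M$ to the larger of the two thresholds so a single $M$ serves both inequalities; alternatively one may invoke the classical occupancy/Gaussian approximation of Kolchin~\cite{Kolchin1978allocation} and reduce the task to exactly these first- and second-moment expansions.
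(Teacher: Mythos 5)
First, a point of reference: the paper never actually proves this lemma---Lemmas~\ref{Lem:Normal} and~\ref{Lem:app_exp} are imported wholesale from~\cite{Kolchin1978allocation} and~\cite{kodialam2006mobicom}---so your attempt has to stand on its own merits. It does not, and the failure is at the decisive step, not in the bookkeeping. For the mean, your own expansion gives $\mu - L_k e^{-\rho} = L_k e^{-\rho}\bigl(e^{r_k(1)}-1\bigr)$ with $r_k(1) = -\frac{\rho}{2L_k}\bigl(1+O(1/L_k)\bigr)$, so that for bounded $\rho$ and large $L_k$,
\begin{equation*}
\bigl|\mu - L_k e^{-\rho}\bigr| = \frac{\rho e^{-\rho}}{2}\,\bigl(1 + o(1)\bigr).
\end{equation*}
The prefactor $L_k$ cancels the $1/L_k$ in $r_k(1)$ exactly: this quantity is \emph{not} ``driven to zero once $L_k$ is large.'' Concretely, take $z_k = L_k = M$ (so $\rho=1$, which is precisely the operating point the algorithm steers toward, since $L_k$ is set to the estimate of $z_k$): then $\bigl|\mu - L_k e^{-\rho}\bigr| \to \frac{1}{2e}\approx 0.18$ as $M\to\infty$. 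Hence your conclusion ``choosing $M$ large enough makes the residual at most $\epsilon^*$'' fails for every $\epsilon^* < \frac{1}{2e}$, and no patch can rescue it, because in that regime the statement under its literal reading is itself false: the absolute error converges to the nonzero constant $\frac{\rho e^{-\rho}}{2}$ rather than to zero.

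The variance part inherits the same defect. Your cancellation of the $L_k^2 e^{-2\rho}$ terms is algebraically right, but what is left over---the constant $2\rho$ from $L_k(L_k-1)(1-2\rho/L_k)-L_k^2(1-\rho/L_k) = -(1+\rho)L_k + 2\rho$, together with the $O(L_k^{-2})$ remainders multiplied back by the $L_k^2$ prefactors---is $\Theta(1)$: bounded, as you say, but not small, and ``small'' is what the lemma demands. What your expansions genuinely prove is a uniform $O(1)$ bound on both absolute errors, equivalently $|\mu - L_k e^{-\rho}| \le \epsilon^* L_k$ and $|\sigma^2 - L_k(e^{-\rho}-(1+\rho)e^{-2\rho})| \le \epsilon^* L_k$ for $M$ large, i.e.\ vanishing \emph{relative} error. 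That weaker, correct statement is what the occupancy literature supplies and is all the paper needs downstream, since every later use is normalized by $L_k$ (e.g., $y_k=N_k/L_k$, and an $O(1)$ error in $\sigma^2$ perturbs $Var[u_k]=\sigma^2/L_k^2$ only at order $1/L_k^2$). But it is not the inequality you set out to prove: a sound write-up must either prove the lemma in this corrected $o(L_k)$ form or explicitly flag that the stated absolute-error form cannot hold for small $\epsilon^*$.
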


Lemmas~\ref{Lem:Normal} and~\ref{Lem:app_exp} imply that in large-scale RFID systems, we can use $L_k e^{- \rho}$ and $L_k (e^{- \rho} - (1+ \rho)e^{- 2\rho})$ to approximate $\mu $ and $\sigma^2$.

At the end of each frame $k$, the reader gets a measure $y_k$ of the idle slot frequency defined as
\begin{equation}
y_k = \frac{N_k}{L_k}.
\label{Eq:y_k}
\end{equation}
Recall Lemma~\ref{Lem:Normal}, it holds that $y_k$ is a Normal distributed random variable specified as follows:
\begin{align*}
E[y_k] &=  e^{- \rho},\\
Var[y_k] &= \frac{1}{{L_k}} (e^{- \rho} - (1+ \rho)e^{- 2\rho}).
\end{align*}

Since there are $z_k$ tags that reply in frame $k$, the probability that a slot is idle, denoted as $p(z_k)$, can be calculated as
\begin{equation}
p(z_k) = (1- \frac{1}{L_k})^{z_k} \approx e^{-\frac{z_k}{L_k}}.
\label{Eq:h-in}
\end{equation}
Notice that for large $z_k$, $p(z_k)$ can be regarded as a continuously differentiable function of $z_k$.

Using the language in the Kalman filter, we can write $y_k$ as follows:
\begin{equation}
y_k = p(z_k) + u_k,
\label{Eq:measurement}
\end{equation}
where, based on the statistic characteristics of $y_k$, $u_k$ is a Gaussian random variable with zero mean and variance
\begin{equation}
Var[u_k] = \frac{1}{{L_k}} (e^{- \rho} - (1+ \rho)e^{- 2\rho}).
\end{equation}
We note that $u_k$ measures the uncertainty of $y_k$.

To summarise, the discrete-time model for static RFID systems is characterized by~\eqref{Eq:in_system_state} and~\eqref{Eq:measurement}.
We conclude this subsection by stating the following auxiliary lemma which is useful in our later analysis.

\begin{lemma}
Denote the function
$$\Lambda(\rho) \triangleq Var[u_k]=e^{- \rho} - (1+ \rho)e^{- 2\rho}, \rho>0,$$
it holds that $\Lambda(\rho)$ has a unique maximiser $\rho^* \in (1,2)$, i.e.,
\begin{equation}
Var[u_k] \le \frac{1}{{L_k}} \Lambda(\rho^*).
\label{Eq:Upp_u}
\end{equation}
Furthermore, it holds that $\Lambda(\rho^*) < e^{-2}$.
\label{Lem:Upp_u}
\end{lemma}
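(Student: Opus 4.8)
The plan is to treat this as a one-variable calculus problem: first locate the stationary point of $\Lambda$ by reducing the first-order condition to a transcendental equation whose root I can trap in $(1,2)$, and then obtain the sharp bound $\Lambda(\rho^*)<e^{-2}$ by exploiting the stationarity relation to rewrite $\Lambda(\rho^*)$ in closed rational form. First I would differentiate $\Lambda$ directly. A short computation gives
\[
\Lambda'(\rho) = -e^{-\rho} + (1+2\rho)e^{-2\rho} = -e^{-2\rho}\bigl(e^{\rho} - 1 - 2\rho\bigr).
\]
Writing $g(\rho) \triangleq e^{\rho} - 1 - 2\rho$, the sign of $\Lambda'(\rho)$ on $(0,\infty)$ is exactly opposite to that of $g(\rho)$, since $e^{-2\rho}>0$. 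Thus locating and characterising the maximiser reduces entirely to studying the positive roots of $g$.

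Next I would analyse $g$. Here $g(0)=0$ and $g'(\rho)=e^{\rho}-2$, so $g$ strictly decreases on $(0,\ln 2)$ and strictly increases on $(\ln 2,\infty)$, attaining a unique interior minimum at $\rho=\ln 2$ with $g(\ln 2)=1-2\ln 2<0$. Consequently $g$ is strictly negative on $(0,\ln 2]$ and, rising from this negative minimum to $+\infty$, has exactly one positive root $\rho^*$. Evaluating $g(1)=e-3<0$ and $g(2)=e^2-5>0$ and invoking the intermediate value theorem together with the monotonicity of $g$ on $(\ln 2,\infty)$ pins down $\rho^*\in(1,2)$. Translating back through $\Lambda'=-e^{-2\rho}g$ gives $\Lambda'>0$ on $(0,\rho^*)$ and $\Lambda'<0$ on $(\rho^*,\infty)$, so $\rho^*$ is the unique global maximiser of $\Lambda$ on $(0,\infty)$, which is precisely what~\eqref{Eq:Upp_u} asserts.

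For the sharp bound, the key trick is to use the stationarity relation $e^{\rho^*}=1+2\rho^*$, equivalently $e^{-\rho^*}=1/(1+2\rho^*)$, to eliminate the exponentials. Substituting into the factored form $\Lambda(\rho^*)=e^{-\rho^*}\bigl(1-(1+\rho^*)e^{-\rho^*}\bigr)$ collapses the expression to
\[
\Lambda(\rho^*) = \frac{\rho^*}{(1+2\rho^*)^2}.
\]
I would then study $h(\rho)\triangleq \rho/(1+2\rho)^2$, whose derivative $h'(\rho)=(1-2\rho)/(1+2\rho)^3$ is negative for $\rho>\tfrac12$, so $h$ is strictly decreasing there. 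Since $\rho^*>1$, this yields $\Lambda(\rho^*)=h(\rho^*)<h(1)=\tfrac19<e^{-2}$, which completes the argument.

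I do not anticipate a serious obstacle, as the whole argument is a monotonicity analysis. The only two points needing care are verifying that $g$ has a single positive root (handled by its one-descent-then-one-ascent shape starting from $g(0)=0$) and the final numerical comparison. The closed-form substitution is what makes the bound clean: reducing to the rational inequality $\tfrac19<e^{-2}$ sidesteps any need to approximate $\rho^*$ numerically, and the monotonicity of $h$ on $(\tfrac12,\infty)$ lets the crude bound $\rho^*>1$ do all the work.
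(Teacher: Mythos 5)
Your proof is correct. For the first claim (unique maximiser $\rho^*\in(1,2)$) you follow essentially the same route as the paper: both compute $\Lambda'(\rho)=e^{-2\rho}\left(2\rho+1-e^{\rho}\right)$ (you work with $g=e^{\rho}-1-2\rho$, the paper with its negative), analyse its sign via the derivative $e^{\rho}-2$ and the turning point $\ln 2$, and trap the root between the endpoint values at $\rho=1$ and $\rho=2$. Where you genuinely depart from the paper is in proving $\Lambda(\rho^*)<e^{-2}$. The paper does this with a second, rather ad hoc monotonicity argument: it shows the auxiliary function $\rho e^{-\rho}+e^{-\rho}+e^{\rho-2}$ is increasing on $(1,2)$, hence exceeds its value $3/e$ at $\rho=1$, and concludes $\Lambda(\rho^*)-e^{-2}=e^{-\rho^*}\left(1-\rho^*e^{-\rho^*}-e^{-\rho^*}-e^{\rho^*-2}\right)<e^{-\rho^*}\left(1-\frac{3}{e}\right)<0$. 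You instead exploit the stationarity relation $e^{\rho^*}=1+2\rho^*$ to eliminate all exponentials, collapsing the maximum to the closed rational form $\Lambda(\rho^*)=\rho^*/(1+2\rho^*)^2$, and then use monotonicity of $h(\rho)=\rho/(1+2\rho)^2$ on $\left(\frac{1}{2},\infty\right)$ with the crude bound $\rho^*>1$ to obtain $\Lambda(\rho^*)<\frac{1}{9}<e^{-2}$. Your route is cleaner and strictly stronger: it yields the sharper constant $\frac{1}{9}\approx 0.111$ (versus $e^{-2}\approx 0.135$) and reduces the final comparison to the elementary fact $e^2<9$, whereas the paper's argument requires inventing the auxiliary function and an extra derivative estimate. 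Both arguments are valid; yours is arguably the more transparent one, and the identity $\Lambda(\rho^*)=\rho^*/(1+2\rho^*)^2$ could even be reused elsewhere in the analysis wherever $\Lambda(\rho^*)$ appears.
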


\begin{proof}
We compute the derivative of $\Lambda(\rho)$:
\begin{equation}
\frac{\mathrm{d}\Lambda}{\mathrm{d}\rho}
= e^{-2 \rho} \left( 2 \rho +1 -e^{\rho} \right).
\end{equation}
Noticing that
\begin{align}
\frac{\mathrm{d}(2 \rho +1 -e^{\rho})}{\mathrm{d}\rho}
= 2 - e^{\rho}
\begin{cases}
>0, \rho < \ln 2,\\
=0, \rho = \ln 2,\\
<0, \rho > \ln 2,\\
\end{cases}
\end{align}
and that $2 \rho +1 -e^{\rho} >0$ for $\rho \le 1$, $2 \rho +1 -e^{\rho} < 0$ for $\rho \ge 2$, we have
\begin{itemize}
\item $\frac{d\Lambda}{d \rho} >0$, for $0<\rho \le1$ and $\frac{d\Lambda}{d \rho} <0$, for $\rho\ge 2$;
\item $\frac{d\Lambda}{d \rho}$ is monotonously decreasing for $\rho>\ln 2$,
\end{itemize}
Hence, there is a unique solution $\rho^* \in (1,2)$ for $\frac{\mathrm{d}\Lambda}{\mathrm{d}\rho}=0$, i.e. $\Lambda(\rho^*) \ge \Lambda(\rho), \forall\rho >0$.

We now prove $\Lambda(\rho^*) < e^{-2}$. For $\rho* \in (1,2)$, since
\begin{align*}
\frac{\mathrm{d}\left(\rho^* e^{-\rho^*}+e^{-\rho^*}+e^{\rho^*-2}\right)}{\mathrm{d}\rho^*}
&=   e^{-\rho^*} \left( e^{(2\rho^*-2)} -\rho^*\right) \\
&\ge e^{-\rho^*} ( \rho^*-1) > 0,
\end{align*}
it holds that $\rho^* e^{-\rho^*}+e^{-\rho^*}+e^{\rho^*-2} > \frac{3}{e}$ for $\rho* \in (1,2)$.
We thus get
\begin{align}
\Lambda(\rho^*) - e^{-2}
&=  e^{-\rho^*} \left(1-\rho^* e^{-\rho^*}-e^{-\rho^*}-e^{\rho^*-2}\right) \nonumber\\
&<  e^{-\rho^*} \left(1-\frac{3}{e}\right) <  0,
\end{align}
which leads to $\Lambda(\rho^*) < e^{-2}$.
\end{proof}

\subsection{Tag Population Estimation Algorithm}

Noticing that the system state characterised by~\eqref{Eq:in_system_state} and~\eqref{Eq:measurement} is a discrete-time nonlinear system, we thus leverage the two-step EKF described in Definition~\ref{Def:EKF} to estimate the system state.
In~\eqref{Eq:gain}, the Kalman gain $K_k$ increases with $Q_k$ while decreases with $R_k$. As a result, $Q_k$ and $R_k$ can be used to tune the EKF such that increasing $Q_k$ and/or decreasing $R_k$ accelerates the convergence rate but leads to larger estimation error. In our design, we set $Q_k$ to a constant $q>0$ and introduce a parameter $\phi_k$ as follows to replace $R_k$ to facilitate our demonstration:
\begin{equation}
R_k = \phi_k P_{k|k-1} {C_k}^2.
\label{Eq:Rk}
\end{equation}
It can be noted from~\eqref{Eq:gain} and~\eqref{Eq:Rk} that $K_k$ is monotonously decreasing in $\phi_k$, i.e., a small $\phi_k$ leads to quick convergence with the price of relatively high estimation error. Hence, choosing the appropriate value for $\phi_k$ consists of striking a balance between the convergence rate and the estimation error. In our work, we take a dynamic approach by setting $\phi_k$ to a small value $\uline \phi$ but satisfying~\eqref{Eq:phi_var} at the first few rounds ($J$ rounds) of estimation to allow the system to act quickly since the estimation in the beginning phase can be very coarse. After that we set $\phi_k$ to a relatively high value $\overline \phi$ to achieve high estimation accuracy.

\begin{algorithm}[H]
\caption{Tag population estimation (static cases): executed by the reader}
\begin{algorithmic}[1]
\REQUIRE{$\uline {z}_0$, $P_{0|0}$, $q$, $J$, $\uline \phi$, $\overline \phi$, maximum number of rounds $k_{max}$}
\ENSURE{Estimated tag population set $S_z =\{\hat{z}_{k|k}: k \in [0, k_{max}] \}$ }
\STATE \textbf{Initialisation:} $\hat{z}_{0|0} \leftarrow \uline {z}_0$, $Q_0\leftarrow q$, $S_z = \{\hat{z}_{0|0}\}$
\FOR{$k=1$ to $k_{max}$}
     \STATE $\hat{z}_{k|k-1} \leftarrow \hat{z}_{k-1|k-1}$, $L_{k} \leftarrow \hat{z}_{k|k-1}$,
           \\$P_{k|k-1} \leftarrow P_{k-1|k-1} + Q_{k-1}$
     \STATE Generate a new random seed $Rs_{k}$
     \STATE Broadcast ($L_{k}$, $Rs_k$)
     \STATE Run \textit{Listen-before-Talk} protocol
     \STATE Obtain the number of idle slots $N_{k}$, \\
            Compute $y_{k}$ and $v_{k}$ using~\eqref{Eq:y_k} and~\eqref{Eq:innovation}
     \STATE $Q_k \leftarrow q$
     \IF{$k \le J$}
        \STATE $\phi_k \leftarrow \uline \phi$
     \ELSE
        \STATE $\phi_k \leftarrow \overline \phi$
     \ENDIF
     \STATE Calculate $R_{k}$ and $K_{k}$ using~\eqref{Eq:Rk} and~\eqref{Eq:gain}
     \STATE Update $\hat{z}_{k|k}$ and $P_{k|k}$ using~\eqref{Eq:Mz_up} and~\eqref{Eq:Mp_up}
     \STATE $S_z \leftarrow S_z \ \mathbb{\cup} \ \{\hat{z}_{k|k}\}$
\ENDFOR
\end{algorithmic}
\label{Al:SEKF-in}
\end{algorithm}
\vspace{-0.2cm}
Now, we are ready to present our tag population estimation algorithm as illustrated in Algorithm~\ref{Al:SEKF-in}. The major procedures of our estimation algorithm can be summarised as:
\begin{enumerate}
\item \textit{In the beginning of frame $k$: prediction.} The reader first predicts the tag population based on the estimation at the end of frame $k$$-$$1$. It sets the frame length $L_k$ to the predicted value.
\item The reader launches the \textit{Listen-before-talk} protocol for frame $k$.
\item \textit{At the end of frame $k$: correction}. The reader computes $N_k$ based on $B_k$ and further calculates $y_k$ and $v_k$ from $N_k$. It then updates the prediction with the corrected estimate $\hat{z}_{k|k}$ following~\eqref{Eq:Mz_up}.
\end{enumerate}

We will theoretically establish the stability and accuracy of the estimation algorithm in Sec.~\ref{sec:analysis}.

\section{Tag Population Estimation: Dynamic Systems}
\label{sec:dynamic RFID}

In this section, we further tackle the dynamic case where the tag population may vary during the estimation process. We first establish the system model and then present our estimation algorithm.

\subsection{System Dynamics and Measurement Model}

In dynamic RFID systems, we can formulate the system dynamics as
\begin{equation}
z_{k+1} = z_{k} + w_k,
\label{Eq:system_state}
\end{equation}
where the tag population $z_{k+1}$ in frame $k$$+$$1$ consists of two parts: i) the tag population in frame $k$ and ii) a random variable $w_k$ which accounts for the stochastic variation of tag population resulting from the tag arrival/departure during frame $k$. Notice that $w_k$ is referred to as process noise in Kalman filters and the appropriate characterisation of $w_k$ is crucial in the design of stable Kalman filters, which will be investigated in detail later. Besides, the measurement model is the same as the static case. Hence, the discrete-time model for dynamic RFID systems can be characterized by~\eqref{Eq:system_state} and~\eqref{Eq:measurement}.

\subsection{Tag Population Estimation Algorithm}

In the dynamic case, we leverage the two-step EKF to estimate the system state combined with the CUSUM test to further trace the tag population fluctuation.

Our main estimation algorithm is illustrated in Algorithm~\ref{Al:SEKF}. The difference compared to the static scenario is that tag population variation needs to be detected by the CUSUM test presented in Algorithm~\ref{Al:CUSUM} in the next subsection and the output of Algorithm~\ref{Al:CUSUM} acts as a feedback to $\phi_k$ because due to the tag population variation, $\phi_k$ is no more a constant after the $J$th round as the static case. The overall structure of the estimation algorithm is illustrated in Fig.~\ref{fig:EKF}. We note that in the case where $z_k$ is constant, Algorithm~\ref{Al:SEKF} degenerates to Algorithm~\ref{Al:SEKF-in}.


\begin{algorithm}[H]
\caption{Tag population estimation (unified framework): executed by the reader}
\begin{algorithmic}[1]
\REQUIRE{$\uline {z}_0$, $P_{0|0}$, $q$, $J$, $\uline \phi$, $\overline \phi$, maximum number of rounds $k_{max}$}
\ENSURE{Estimation set $S_z =\{\hat{z}_{k|k}: k \in [0, k_{max}] \}$ }
\STATE \textbf{Initialisation:} $\hat{z}_{0|0} \leftarrow \uline {z}_0$, $Q_0\leftarrow q$, $S_z = \{\hat{z}_{0|0}\}$
\FOR{$k=1$ to $k_{max}$}
     \STATE $\hat{z}_{k|k-1} \leftarrow \hat{z}_{k-1|k-1}$, $L_{k} \leftarrow \hat{z}_{k|k-1}$,
           \\$P_{k|k-1} \leftarrow P_{k-1|k-1} + Q_{k-1}$
     \STATE Generate a new random seed $Rs_{k}$
     \STATE Broadcast ($L_{k}$, $Rs_k$)
     \STATE Run \textit{Listen-before-Talk} protocol
     \STATE Obtain the number of idle slots $N_{k}$, \\
            Compute $y_{k}$ and $v_{k}$ using~\eqref{Eq:y_k} and~\eqref{Eq:innovation}
     \STATE $Q_k \leftarrow q$
     \IF{$k \le J$}
        \STATE $\phi_k \leftarrow \uline \phi$
     \ELSE
        \STATE Execute Algorithm~\ref{Al:CUSUM}
        \STATE $\phi_{k} \leftarrow \textit{output of Algorithm~\ref{Al:CUSUM}}$
     \ENDIF
     \STATE Calculate $R_{k}$ and $K_{k}$ using~\eqref{Eq:Rk} and~\eqref{Eq:gain}
     \STATE Update $\hat{z}_{k|k}$ and $P_{k|k}$ using~\eqref{Eq:Mz_up} and~\eqref{Eq:Mp_up}
     \STATE $S_z \leftarrow S_z \ \mathbb{\cup} \ \{\hat{z}_{k|k}\}$
\ENDFOR
\end{algorithmic}
\label{Al:SEKF}
\end{algorithm}

\begin{figure}[htbp]
\centering
\includegraphics[width=6cm]{./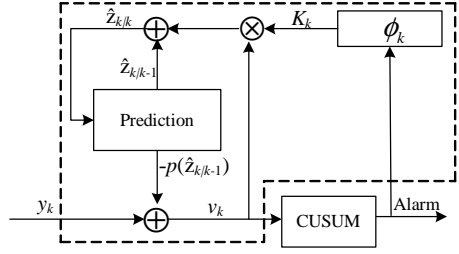}
\caption{Estimation algorithm diagram: Dashed box indicates the EKF.}
\label{fig:EKF}
\end{figure}

\subsection{Detecting Tag Population Change: CUSUM Test}

\noindent\textbf{The CUSUM Detection Framework.}

We leverage the CUSUM test to detect the change of tag population and further adjust $\phi_k$. CUSUM test is a sequential analysis technique typically used for change detection~\cite{gustafsson2000CUSUM}. It is shown to be asymptotically optimal in the sense of the minimum detection time subject to a fixed worst-case expected false alarm rate~\cite{brodsky1993nonparametric}.

In the context of dynamic tag population detection, the reader monitors the innovation process $v_k=y_k- p(\hat z_{k|k-1})$. If the number of the tags population is constant, $v_k$ equals to $u_k$ which is a Gaussian process with zero mean. In contrast, upon the system state changes, i.e., tag population changes, $v_k$ drifts away from the zero mean. In our design, we use $\Phi_k$ as a normalised input to the CUSUM test by normalising $v_k$ with its estimated standard variance, specified as follows:
\begin{equation}
\Phi_k = \frac{v_k}{\sqrt{(P_{k|k-1}+Q_{k-1}){C_k}^2 + Var[u_k] \big|_{z_k = \hat z_{k|k-1}}}}.
\label{Eq:CUSUM_inov}
\end{equation}
The reader further updates the CUSUM statistics ${g}_k^+$ and ${g}_k^-$ as follows:
\begin{align}
{g}_k^+ &= \max \{0, {g}_{k-1}^+ +\Phi_k - \Upsilon\}, \label{Eq:CUSUM_g+}\\
{g}_k^- &= \min \{0, {g}_{k-1}^- +\Phi_k +\Upsilon\},\  \label{Eq:CUSUM_g-}\\
{g}_k^+ &= {g}_k^- = 0, \text{if} \  \delta =1,
\end{align}
where ${g}_0^+$$=$$0$ and ${g}_0^-$$=0$. And $\Upsilon$$\ge$$0$, referred to as reference value, is a filter design parameter indicating the sensitivity of the CUSUM test to the fluctuation of $\Phi_k$, Moreover, by $\delta$ we define an indicator flag indicating tag population change:
\begin{eqnarray}
\delta=
\begin{cases}
1 & \text{if} \ {g}_k^+ > \theta \ \text{or}\ {g}_k^- <-\theta, \\
0 & \text{otherwise},
\end{cases}
\end{eqnarray}
where $\theta >0$ is a pre-specified CUSUM threshold.

The detailed procedure of the change detection is illustrated in Algorithm~\ref{Al:CUSUM}.
\begin{algorithm}[H]
\caption{CUSUM test: executed by the reader in frame $k$}
\begin{algorithmic}[1]
\REQUIRE{$\Upsilon$, $\theta$}
\ENSURE{$\phi_k$}
\STATE \textbf{Initialisation:} ${g}_0^+ \leftarrow 0$, ${g}_0^- \leftarrow 0$
    \STATE Compute $\Phi_k$ using equation~\eqref{Eq:CUSUM_inov}
    \STATE ${g}_k^+\leftarrow $~\eqref{Eq:CUSUM_g+}, ${g}_k^- \leftarrow$~\eqref{Eq:CUSUM_g-}
    \IF{${g}_k^+ >\theta$ or ${g}_k^- <-\theta$}
         \STATE    $\delta \leftarrow 1$, $\phi_k \leftarrow \varphi_1(\delta)$, ${g}_k^+ \leftarrow 0$, ${g}_{k}^-\leftarrow 0$
    \ELSE
         \STATE    $\delta \leftarrow 0$, $\phi_k \leftarrow \varphi_1(\delta)$
    \ENDIF
    \STATE Return $\phi_k$
\end{algorithmic}
\label{Al:CUSUM}
\end{algorithm}

\noindent\textbf{Parameter tuning in CUSUM test.}

The choice of the threshold $\theta$ and the drift parameter $\Upsilon$ has a directly impact on the performance of the CUSUM test in terms of detection delay and false alarm rate. Formally, the average running length (ARL) $L(\mu^*)$ is used to denote the duration between two actions~\cite{basseville1993abruptdetection}. For a large $\theta$, $L(\mu^*)$ can be approximated as
\footnote{For two variables X, Y, asymptotic notation $X=\Theta(Y)$ implies that there exist positives $c_1$, $c_2$ and $x_0$ such that for $\forall X>x_0$, it follows that $c_1 X \le Y \le c_2 X$.}
\begin{eqnarray}
L(\mu^*) =
\begin{cases}
\Theta(\theta),& \text{if}\ \mu^* \neq 0,\\
\Theta(\theta^2),& \text{if}\ \mu^* = 0,
\end{cases}
\label{Eq:ARL}
\end{eqnarray}
where $\mu^*$ denotes the mean of the process $\Phi_k$.

In our context, ARL corresponds to the mean time between two false alarms in the static case and the mean detection delay of the tag population change in the dynamic case. It is easy to see from~\eqref{Eq:ARL} that a higher value of $\theta$ leads to lower false alarm rate at the price of longer detection delay. Therefore, the choices of $\theta$ and $\Upsilon$ consists of a tradeoff between the false alarm rate and the detection delay.

Recall that $\Phi_k$ can be approximated to a white noise process, i.e, $\Phi_k \sim {\cal N}[ \mu^*, {\sigma^*}^2]$ with $\mu^*=0$, $\sigma^*=1$ if the system state does not change. Generically, as recommended in~\cite{spiring2007quality}, setting $\theta$ and $\Upsilon$ as follows achieves good ARL from the engineering perspective.
\begin{eqnarray}
\theta &=& 4 \sigma^*,
\label{Eq:theta}\\
\Upsilon &=& \mu^* + 0.5 \sigma^*.
\label{Eq:Upsilon}
\end{eqnarray}

In the CUSUM framework, we set $\phi_k$ by $\varphi_1(\delta)$ as follows:
\begin{eqnarray}
\varphi_1(\delta) =
\begin{cases}
\uline \phi,& \text{if}\ \delta =1, \\
\overline \phi,& \text{if}\ \delta =0.
\end{cases}
\end{eqnarray}
The rationale is that once a change on the tag population is detected in frame $k$, $\phi_k$ is set to $\uline \phi$ to quickly react to the change, while $\phi_k$ sticks to $\overline \phi$ when no system change is detected.

\section{Performance Analysis}
\label{sec:analysis}

In this section, we establish the stability and the accuracy of our estimation algorithms for both static and dynamic cases.

\subsection{Static Case}

Our analysis is composed of two steps. First, we derive the estimation error. Second, we establish the stability and the accuracy of Algorithm~\ref{Al:SEKF} in terms of the boundedness of estimation error.

\noindent\textbf{Computing Estimation Error.}

We first approximate the non-linear discrete system by a linear one. To that end, as the function $p(z_{k})$ is continuously differentiable at $z_{k} = \hat{z}_{k|k-1}$, using the Taylor expansion and the fact that $L_{k} = \hat{z}_{k|k-1}$, we have
\begin{equation}
p(z_{k}) = p(\hat{z}_{k|k-1}) + C_{k} (z_{k}-\hat{z}_{k|k-1}) + \chi (z_{k},\hat{z}_{k|k-1}),
\label{Eq:linear_h}
\end{equation}
where
\begin{align}
&C_{k} = -\frac{1}{e \hat{z}_{k|k-1}}, \label{Eq:ck}\\
&\chi (z_{k},\hat{z}_{k|k-1}) = \sum_{j=2}^{\infty} { \frac{1}{e j!}(1-\frac{z_{k}}{\hat{z}_{k|k-1}})^j}.\label{Eq:resdual}
\end{align}
Regarding the convergence of $\chi (z_{k},\hat{z}_{k|k-1})$ in~\eqref{Eq:resdual}, assume that
\begin{equation}
z_{k} = a'_k \hat{z}_{k|k-1}, 0< a'_k < 2,
\label{Eq:a'}
\end{equation}
we can further obtain the boundedness of the residual for the case $z_{k} < 2\hat{z}_{k|k-1}$ by some algebraic operations as follows:
\begin{align}
|\chi (z_{k},\hat{z}_{k|k-1})|
& = \frac{(\hat{z}_{k|k-1} - z_{k})^2}{e {\hat{z}_{k|k-1}}^2}\sum_{j=0}^{\infty} { \frac{1}{
    (j+2)!}\Big|1-\frac{z_{k}}{\hat{z}_{k|k-1}} \Big|^j} \nonumber\\
&\le \frac{(\hat{z}_{k|k-1} - z_{k})^2}{2e { \hat{z}_{k|k-1}}^2} \sum_{j=0}^{\infty}
    {\Big|1-\frac{z_{k}}{\hat{z}_{k|k-1}} \Big|^j} \nonumber\\
&\le \frac{(\hat{z}_{k|k-1} - z_{k})^2}{2e { \hat{z}_{k|k-1}}^2} \frac{1}{1-\Big|1-\frac{z_{k}}{\hat{z}_{k|k-1}} \Big|} \nonumber\\
&\le \frac{(\hat{z}_{k|k-1} - z_{k})^2}{2e a_k { \hat{z}_{k|k-1}}^2},
\end{align}
where
\begin{eqnarray}
a_k =
\begin{cases}
a'_k & \text{if}\ 0< a'_k \le 1,\\
2-a'_k & \text{if}\ 1<a'_k<2.
\end{cases}
\label{Eq:aa'}
\end{eqnarray}

Recall the definition of the estimation error in~\eqref{Eq:error} and using~\eqref{Eq:in_system_state}, ~\eqref{Eq:Tz_up} and~\eqref{Eq:Mz_up}, we can derive the estimation error $e_{k+1|k}$ as follows:
\begin{align}
e_{k+1|k}
=& z_{k+1} - \hat z_{k+1|k}  = z_k - \hat z_{k|k}  = z_k - \hat z_{k|k-1} \nonumber\\
   &- K_k \left[ C_k (z_k - \hat z_{k|k-1}) + \chi (z_{k},\hat{z}_{k|k-1}) + u_k\right] \nonumber\\
=& (1- K_k C_k) e_{k|k-1} + s_k + m_k,
\label{Eq:error_k+1}
\end{align}
where $s_k$ and $m_k$ are defined as
\begin{align}
s_k &= - K_k u_k \label{Eq:sk},\\
m_k &= - K_k \chi (z_{k},\hat{z}_{k|k-1}).
\label{Eq:mk}
\end{align}

\noindent\textbf{Boundedness of Estimation Error}

Having derived the dynamics of the estimation error, we now state the main result on the stochastic stability and accuracy of Algorithm~\ref{Al:SEKF-in}.

\begin{theorem}
Consider the discrete-time stochastic system given by~\eqref{Eq:in_system_state} and~\eqref{Eq:measurement} and Algorithm~\ref{Al:SEKF-in}, the estimation error $e_{k|k-1}$ defined by~\eqref{Eq:error} is exponentially bounded in mean square and bounded w.p.o., if the following conditions hold:
\begin{enumerate}
\item there are positive numbers $\uline q$, $\overline q$, $\uline \phi$ and $\overline \phi$ such that the bounds on $Q_k$ and $\phi_k$ are satisfied for every $k$$\ge$$0$, as in
    \begin{align}
    \uline q    &\le  Q_k \le \overline q, \label{Eq:bound_Q}\\
    \uline \phi &\le  \phi_k \le \overline \phi, \label{Eq:bound_phi}
    \end{align}
\item The initialization must follow the rules
    \begin{align}
    P_{0|0} &>0 \label{Eq:P0}, \\
    |e_{1|0}| &\le \epsilon \label{Eq:e0}
    \end{align}
    with positive real number $\epsilon>0$.
\end{enumerate}
\label{Th:conditions}
\end{theorem}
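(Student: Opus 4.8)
The plan is to apply the online stochastic stability criterion of Lemma~\ref{Lem:Mstability} to the quadratic Lyapunov function
\[
V_k(e_{k|k-1}) \;=\; \frac{e_{k|k-1}^2}{P_{k|k-1}},
\]
so that the whole theorem reduces to producing time-varying parameters $\beta^*,\beta_k,\alpha^*_k,\tau_k$ for which the three hypotheses~\eqref{Eq:bou_v0},~\eqref{Eq:bou_vk} and~\eqref{Eq:bou_Ev} hold.

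The first step is to pin down uniform bounds on the pseudo-covariance. Substituting $R_k=\phi_k P_{k|k-1}C_k^2$ from~\eqref{Eq:Rk} into the gain~\eqref{Eq:gain} collapses the filter to the remarkably simple identities $K_kC_k=\frac{1}{1+\phi_k}$ and $1-K_kC_k=\gamma_k$, where I write $\gamma_k\triangleq\frac{\phi_k}{1+\phi_k}$. The covariance recursion then becomes $P_{k+1|k}=\gamma_k P_{k|k-1}+Q_k$. Condition~\eqref{Eq:bound_phi} forces $\gamma_k\le\frac{\overline\phi}{1+\overline\phi}<1$, so this recursion is a contraction and yields an upper bound $\overline p$, while $P_{k|k-1}\ge Q_{k-1}\ge\uline q$ gives a lower bound $\uline p$. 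With $\uline p\le P_{k|k-1}\le\overline p$ in hand, hypotheses~\eqref{Eq:bou_v0} and~\eqref{Eq:bou_vk} hold immediately with $\beta^*=1/\uline p$ and $\beta_k=1/\overline p$, and the remaining filter quantities $C_k=-\frac{1}{e\hat z_{k|k-1}}$ and $K_k=\frac{1}{C_k(1+\phi_k)}$ are likewise controlled.

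The heart of the proof is the drift inequality~\eqref{Eq:bou_Ev}. Starting from the error recursion~\eqref{Eq:error_k+1}, $e_{k+1|k}=\gamma_k e_{k|k-1}+s_k+m_k$ with $s_k,m_k$ from~\eqref{Eq:sk}--\eqref{Eq:mk}, I would square and take the conditional expectation. Since $u_k$ is zero-mean and independent of the past, $E[s_k\,|\,e_{k|k-1}]=0$ and all cross terms involving $s_k$ vanish, leaving
\[
E[e_{k+1|k}^2\,|\,e_{k|k-1}] \;=\; \gamma_k^2 e_{k|k-1}^2 + 2\gamma_k e_{k|k-1}m_k + m_k^2 + K_k^2\,Var[u_k].
\]
Dividing by $P_{k+1|k}$, the leading term produces the contraction: its coefficient relative to $V_k$ is $\frac{\gamma_k^2 P_{k|k-1}}{\gamma_k P_{k|k-1}+Q_k}=\gamma_k\cdot\frac{\gamma_k P_{k|k-1}}{\gamma_k P_{k|k-1}+Q_k}$, which is strictly below $\gamma_k<1$ and hence defines a genuine $\alpha^*_k$ bounded away from $0$. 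The pure measurement-noise term $K_k^2 Var[u_k]/P_{k+1|k}$ is bounded by means of Lemma~\ref{Lem:Upp_u} (which gives $Var[u_k]\le\Lambda(\rho^*)/L_k<e^{-2}/L_k$) and will be absorbed into $\tau_k$.

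The step I expect to be the main obstacle is the control of the nonlinear remainder $m_k=-K_k\chi(z_k,\hat z_{k|k-1})$, whose contribution is superlinear in the error. Using the residual estimate derived before the statement, $|\chi(z_k,\hat z_{k|k-1})|\le\frac{e_{k|k-1}^2}{2e\,a_k\,\hat z_{k|k-1}^2}$, one gets $|m_k|\le\frac{e_{k|k-1}^2}{2a_k(1+\phi_k)\hat z_{k|k-1}}$, so the cross term is $O(|e_{k|k-1}|^3)$ and $m_k^2$ is $O(e_{k|k-1}^4)$; neither can be dominated by a fixed multiple of the quadratic $V_k$ unconditionally. The resolution is that, after dividing by $V_k$, these terms carry a factor proportional to the relative error $|e_{k|k-1}|/\hat z_{k|k-1}$. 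I would therefore argue inductively that, starting from the smallness condition~\eqref{Eq:e0} with $\epsilon$ chosen small relative to the (constant) population $z_0$, the error stays in the regime~\eqref{Eq:a'} where $z_k<2\hat z_{k|k-1}$ and the relative error is small enough that the superlinear perturbation merely shrinks the effective contraction coefficient while keeping it strictly below one, and simultaneously keeps $\tau_k$ uniformly bounded. This coupling of $\epsilon$ and $P_{0|0}>0$ (condition~\eqref{Eq:P0}) to the design parameters $q,\uline\phi,\overline\phi$ makes $\alpha^*_k$ and $\tau_k$ genuinely time-varying, which is exactly why the online Lemma~\ref{Lem:Mstability} rather than its offline counterpart must be used. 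Once~\eqref{Eq:bou_v0}--\eqref{Eq:bou_Ev} are verified, Lemma~\ref{Lem:Mstability} delivers both exponential boundedness in mean square and boundedness w.p.o.
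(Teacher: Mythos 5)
Your skeleton coincides with the paper's own proof: the same Lyapunov function $V_k(e_{k|k-1})=e_{k|k-1}^2/P_{k|k-1}$, the same collapse of the filter via $K_kC_k=\frac{1}{1+\phi_k}$ and $P_{k+1|k}=\frac{\phi_k}{1+\phi_k}P_{k|k-1}+Q_k$ (the paper's Lemma~\ref{Lem:bound_P} and Lemma~\ref{Lem:alpha}), the same vanishing of the $s_k$ cross terms and the bound on $K_k^2\,Var[u_k]/P_{k+1|k}$ through Lemma~\ref{Lem:Upp_u} (the paper's Lemma~\ref{Lem:E_s2}), and the same final appeal to the online stability criterion of Lemma~\ref{Lem:Mstability}.

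The gap sits exactly at the step you yourself flag as the main obstacle and then resolve only by assertion. The paper does not argue inductively that the error stays small; it fixes a constant $\varsigma\in(0,1)$, imposes the design constraint $\phi_k\ge\frac{1}{4(1+\varsigma)}$ in~\eqref{Eq:phi_var} --- a genuine quantitative requirement on $\uline\phi$ that is not implied by~\eqref{Eq:bound_phi} alone --- and then proves, by an explicit case analysis of the two quadratic inequalities in $a'_k=z_k/\hat z_{k|k-1}$ derived from~\eqref{Eq:cond_a}, that the cubic term satisfies $b_k|e_{k|k-1}|^3\le\varsigma\alpha_k V_k(e_{k|k-1})$ precisely when $|e_{k|k-1}|\le\frac{\varsigma}{1+2\varsigma}\hat z_{k|k-1}$. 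That computation is what defines the $\epsilon$ appearing in hypothesis~\eqref{Eq:e0} (namely $\epsilon=\frac{\varsigma}{1+2\varsigma}\hat z_{1|0}$), what produces the drift inequality~\eqref{Eq:convergence_rate} with the explicit rate $(1-\varsigma)\alpha_k$, and what ties the theorem's hypotheses to its conclusion; in your write-up $\epsilon$ remains an unquantified ``small enough'' and the role of $\phi_k$ in the absorption never appears. Moreover, the patch you propose --- an induction showing the error remains in the regime~\eqref{Eq:a'} for all $k$ --- would not go through as stated: $u_k$ is Gaussian, hence unbounded, so the small-error region is exited with positive probability in every frame and no almost-sure invariance holds. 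The paper (following the standard Reif-style treatment) never claims such invariance; it establishes the drift inequality conditionally on $|e_{k|k-1}|\le\epsilon_k$ and feeds the resulting $\alpha^*_k=(1-\varsigma)\alpha_k$, $\tau_k=\xi_k$, $\beta^*=1/Q_0$, $\beta_k=1/\overline p_k$ into Lemma~\ref{Lem:Mstability} directly. So your plan is the right plan, but the one piece of real algebra that makes the theorem true --- and quantifies its hypotheses --- is missing, and the inductive argument offered in its place would fail.
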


\begin{remark}
By referring to the design objective posed in Section~\ref{sec:model and formulation}, Theorem~\ref{Th:conditions} prove the following properties of our estimation algorithm:
\begin{itemize}
\item the estimation error of our algorithm is bounded in mean square and the relative estimation error tends to zero;
\item the estimated population size converges to the real value with exponential rate.
\end{itemize}

The conditions in Theorem~\ref{Th:conditions} can be interpreted as follows:
\begin{enumerate}
\item The inequalities~\eqref{Eq:bound_Q} and~\eqref{Eq:bound_phi} can be satisfied by the configuring the correspondent parameters in Algorithm~\ref{Al:SEKF-in}, which guarantees the boundedness of the pseudo-covariance $P_{k|k-1}$ as shown later.
\item  The inequality~\eqref{Eq:P0} consists of establishing positive $P_{k|k-1}$ for every $k \ge 1$.
\item As a sufficient condition for stability, the upper bound $\epsilon$ may be too stringent. As shown in the simulation results, stability is still ensured even with a relatively large $\epsilon$.
\end{enumerate}
\end{remark}

Before the proof of Theorem~\ref{Th:conditions}, we first state several auxiliary lemmas to streamline the proof.

\begin{lemma}
Under the conditions of Theorem~\ref{Th:conditions}, if $P_{0|0}>0$, there exist $\uline p_{k},\overline p_{k}>0$ such that the pseudo-covariance $P_{k|k-1}$ is bounded for every $k \ge 1$, i.e.,
\begin{equation}
\uline p_{k} \le  P_{k|k-1} \le \overline p_k.
\label{Eq:bound_P}
\end{equation}
\label{Lem:bound_P}
\end{lemma}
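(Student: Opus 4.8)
The plan is to show that the design substitution $R_k = \phi_k P_{k|k-1} C_k^2$ collapses the EKF pseudo-covariance recursion into a scalar affine map with a contraction factor strictly less than one, after which both bounds in~\eqref{Eq:bound_P} drop out of an elementary induction. The whole point is that the awkward factor $C_k$ and the nonlinearity vanish once the gain is rewritten.

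First I would compute the product $K_k C_k$. Substituting~\eqref{Eq:Rk} into the gain~\eqref{Eq:gain} (with indices shifted down by one) gives
\begin{equation}
K_k C_k = \frac{P_{k|k-1} C_k^2}{P_{k|k-1} C_k^2 + R_k} = \frac{P_{k|k-1} C_k^2}{P_{k|k-1} C_k^2 (1+\phi_k)} = \frac{1}{1+\phi_k},
\end{equation}
which is independent of both $C_k$ and $P_{k|k-1}$. This cancellation is legitimate because $C_k = -1/(e\hat z_{k|k-1}) \neq 0$ and because $P_{k|k-1}>0$, the latter being carried as part of the induction below. Feeding $1-K_kC_k = \phi_k/(1+\phi_k)$ into the measurement update~\eqref{Eq:Mp_up} yields $P_{k|k} = \tfrac{\phi_k}{1+\phi_k}P_{k|k-1}$, and then the time update~\eqref{Eq:Tp_up} gives the one-step recursion
\begin{equation}
P_{k+1|k} = \frac{\phi_k}{1+\phi_k}\, P_{k|k-1} + Q_k .
\end{equation}
Writing $\gamma_k \triangleq \phi_k/(1+\phi_k)$, condition~\eqref{Eq:bound_phi} forces $0 < \uline\gamma \le \gamma_k \le \overline\gamma < 1$ with $\uline\gamma = \uline\phi/(1+\uline\phi)$ and $\overline\gamma = \overline\phi/(1+\overline\phi)$; the strict inequality $\overline\gamma<1$ is the feature that makes the map a contraction.

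Finally I would close the two bounds by induction on this recursion. The base case is $P_{1|0} = P_{0|0} + Q_0 > 0$, which holds by~\eqref{Eq:P0} and $Q_0 = q > 0$. For positivity and the lower bound, whenever $P_{k|k-1}>0$ we get $P_{k+1|k} = \gamma_k P_{k|k-1} + Q_k \ge Q_k \ge \uline q$ using~\eqref{Eq:bound_Q}, so $P_{k|k-1}\ge \uline p_k$ with $\uline p_1 = P_{1|0}$ and $\uline p_k = \uline q$ for $k\ge 2$ (this simultaneously confirms the $P_{k|k-1}>0$ used in the cancellation step). For the upper bound I would iterate $P_{k+1|k} \le \overline\gamma P_{k|k-1} + \overline q$, obtaining
\begin{equation}
P_{k|k-1} \le \overline\gamma^{\,k-1} P_{1|0} + \overline q \sum_{j=0}^{k-2} \overline\gamma^{\,j} \le P_{1|0} + \frac{\overline q}{1-\overline\gamma} \triangleq \overline p_k ,
\end{equation}
a uniform bound precisely because $\overline\gamma<1$. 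Together these give~\eqref{Eq:bound_P}.

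The main obstacle is really just the first step: recognising that the reparametrisation~\eqref{Eq:Rk} makes $K_kC_k$ depend only on $\phi_k$, so that $C_k$ (and hence the state-dependent nonlinearity) cancels entirely. Once the scalar recursion is in hand, the rest is a routine geometric-series estimate. The one subtlety worth flagging is that positivity of $P_{k|k-1}$ cannot be assumed a priori — it must be propagated through the induction, since it is exactly what licenses the $0/0$-free cancellation in the gain.
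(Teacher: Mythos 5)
Your proof is correct and follows essentially the same route as the paper's: both exploit the design $R_k = \phi_k P_{k|k-1} C_k^2$ to reduce the gain product to $K_k C_k = \tfrac{1}{1+\phi_k}$, collapse the pseudo-covariance updates into the scalar recursion $P_{k+1|k} = \tfrac{\phi_k}{1+\phi_k}P_{k|k-1} + Q_k$, and then bound it by a geometric series, with lower bound $P_{k|k-1} \ge Q_{k-1}$ and a uniform upper bound of the same form ($P_{1|0} + \overline q(1+\overline\phi)$ in your version versus $(P_{0|0}+Q_0)\bigl(1-\tfrac{1}{1+\overline\phi}\bigr)^{k-1} + \overline q\,\overline\phi + Q_{k-1}$ in the paper's). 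Your explicit propagation of positivity of $P_{k|k-1}$ through the induction is a minor point of added rigor that the paper leaves implicit, but it does not change the substance of the argument.
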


\begin{IEEEproof}
Recall~\eqref{Eq:Tp_up} and~\eqref{Eq:Mp_up}, we have
\begin{equation}
P_{k|k-1} \ge Q_{k-1},
\end{equation}
and
\begin{align}
P_{k|k-1}
&= P_{k-1|k-1} + Q_{k-1} \nonumber \\
&= P_{k-1|k-2} (1-K_{k-1} C_{k-1}) + Q_{k-1} \nonumber \\
&= P_{k-1|k-2} \left( 1- \frac{P_{k-1|k-2} {C_{k-1}}^2}{{P_{k-1|k-2} {C_{k-1}}^2 + R_{k-1}}}\right) + Q_{k-1}.
\label{Eq:P_Q}
\end{align}
Following the design of $R_k$ in~\eqref{Eq:Rk} and by iteration, we further get
\begin{align*}
&P_{k|k-1}
= P_{k-1|k-2} \left( 1-\frac{1}{1+ \phi_{k-1}} \right) + Q_{k-1} \nonumber\\
& = P_{1|0} \prod_{i=1}^{k-1} \left( 1-\frac{1}{1+ \phi_{i}} \right) + \sum_{i=0}^{k-2} {Q_i \prod_{j=i}^{k-2} \left( 1-\frac{1}{1+ \phi_{j+1}} \right)}
   + Q_{k-1}.
\label{Eq:P_phi}
\end{align*}

Since $\phi_k$ and $Q_k$ are controllable parameters, we can set $\phi_k \le \overline \phi$ and $Q_k \le \overline q$ for every $k \ge 0$ in Algorithm~\ref{Al:SEKF-in}, where $\overline \phi,\overline q >0$. Consequently, we have
\begin{align}
P_{k|k-1}
& \le P_{1|0} \left( 1-\frac{1}{1+ \overline \phi} \right)^{k-1}
      + {\overline q}\sum_{j=1}^{k-1} {\left( 1-\frac{1}{1+ \overline \phi} \right)^j} + Q_{k-1} \nonumber\\
& \le (P_{0|0} + Q_0) \left( 1-\frac{1}{1+ \overline \phi} \right)^{k-1}
      + {\overline q} {\overline \phi} + Q_{k-1}
\end{align}

Let $\overline p_{k}= ((P_{0|0} + Q_0) \left( 1-\frac{1}{1+ \overline \phi} \right)^{k-1}+ {\overline q} {\overline \phi} + Q_{k-1}$ and $\uline p_{k} = Q_{k-1}$, we thus have $\uline p_{k} \le P_{k|k-1} \le \overline p_{k}$.
\end{IEEEproof}

\begin{lemma}
Let $\alpha_k\triangleq\frac{1}{1+ \phi_{k}}$, it holds that
\begin{equation}
\frac{(1-K_k C_k)^2}{P_{k+1|k}} e_{k|k-1}^2 \le (1-\alpha_k) \frac{e_{k|k-1}^2}{P_{k|k-1}},\ \forall k\ge 1.
\label{Eq:alpha}
\end{equation}
\label{Lem:alpha}
\end{lemma}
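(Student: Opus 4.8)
The plan is to collapse the stated inequality to a one-line comparison of consecutive pseudo-covariances, leveraging the tailored design $R_k = \phi_k P_{k|k-1} C_k^2$ in~\eqref{Eq:Rk}. First I would compute the product $K_k C_k$ in closed form. Substituting~\eqref{Eq:Rk} into the gain expression~\eqref{Eq:gain} gives $K_k C_k = \frac{P_{k|k-1} C_k^2}{P_{k|k-1} C_k^2 + R_k} = \frac{1}{1+\phi_k} = \alpha_k$, so that $1 - K_k C_k = 1-\alpha_k$. The key observation is that the scaling of $R_k$ by $P_{k|k-1} C_k^2$ is chosen precisely so that the gain-times-sensitivity term decouples from $C_k$ and reduces to the constant $\alpha_k \in (0,1)$, the inclusion following from $\phi_k>0$.

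Second I would record the scalar recursion satisfied by the priori pseudo-covariance. Chaining the time update~\eqref{Eq:Tp_up} with the measurement update~\eqref{Eq:Mp_up} yields $P_{k+1|k} = P_{k|k} + Q_k = P_{k|k-1}(1-K_k C_k) + Q_k = (1-\alpha_k) P_{k|k-1} + Q_k$.

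Third I would reduce the target inequality. Inserting $1 - K_k C_k = 1-\alpha_k$ into the left-hand side rewrites~\eqref{Eq:alpha} as $\frac{(1-\alpha_k)^2}{P_{k+1|k}} e_{k|k-1}^2 \le (1-\alpha_k) \frac{e_{k|k-1}^2}{P_{k|k-1}}$. Since Lemma~\ref{Lem:bound_P} guarantees $P_{k|k-1}, P_{k+1|k} > 0$ and since $1-\alpha_k>0$ while $e_{k|k-1}^2 \ge 0$, I can cancel the common nonnegative factor $(1-\alpha_k) e_{k|k-1}^2$ (the case $e_{k|k-1}=0$ making both sides vanish trivially), reducing the claim to $(1-\alpha_k) P_{k|k-1} \le P_{k+1|k}$. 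Substituting the recursion from the previous step, this is exactly $(1-\alpha_k) P_{k|k-1} \le (1-\alpha_k) P_{k|k-1} + Q_k$, which holds because $Q_k = q > 0$.

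There is no genuine analytic obstacle here; the argument is essentially algebraic bookkeeping. The only points requiring care are the positivity conditions $1-\alpha_k>0$, $P_{k|k-1}>0$ and $P_{k+1|k}>0$, which I would justify from the positivity of $\phi_k$ together with Lemma~\ref{Lem:bound_P}, and the degenerate case $e_{k|k-1}=0$, which is immediate. The conceptual content is entirely contained in the identity $K_k C_k = \alpha_k$; once that is in hand, the surplus term $Q_k$ in the covariance recursion furnishes exactly the slack needed to close the inequality.
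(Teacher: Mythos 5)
Your proposal is correct and follows essentially the same route as the paper: both arguments rest on the design identity $K_k C_k = \frac{1}{1+\phi_k} = \alpha_k$ (from $R_k = \phi_k P_{k|k-1} C_k^2$) and on the covariance recursion $P_{k+1|k} = (1-K_k C_k)P_{k|k-1} + Q_k \ge (1-K_k C_k)P_{k|k-1}$, where dropping $Q_k \ge 0$ supplies the needed slack. The only difference is presentational — you make the identity $K_k C_k = \alpha_k$ explicit up front and cancel common factors, whereas the paper chains the same facts as a string of inequalities.
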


\begin{IEEEproof}
From~\eqref{Eq:P_Q}, we have
\begin{align}
P_{k+1|k}
&= P_{k|k-1} \left( 1-K_k C_k \right) + Q_{k} \nonumber \\
&\ge P_{k|k-1} \left( 1-K_k C_k \right).
\label{Eq:PKC}
\end{align}
By substituting it into the left-hand side of~\eqref{Eq:alpha} and using the fact that $R_k = \phi_{k} P_{k|k-1} {C_{k}}^2$ for every $k \ge1$, we get
\begin{align}
\frac{(1-K_k C_k)^2}{P_{k+1|k}} e_{k|k-1}^2
&\le \frac{(1-K_k C_k)^2}{P_{k|k-1} \left( 1-K_k C_k \right)} e_{k|k-1}^2 \nonumber\\
&\le (1-K_k C_k) \frac{{e_{k|k-1}}^2 }{P_{k|k-1}} \nonumber\\
&\le \left( 1-\frac{1}{1+ \phi_{k}} \right) \frac{e_{k|k-1}^2 }{P_{k|k-1}}.
\end{align}
We are thus able to prove~\eqref{Eq:alpha}.
\end{IEEEproof}

\begin{lemma}
Let $\displaystyle b_k \triangleq \frac{4 a_k \phi_{k}+|a'_k-1|}{4 a_k^2 \phi_{k}(1+\phi_{k}) \hat{z}_{k|k-1} P_{k|k-1}}$, it holds that
\begin{equation}
\frac{m_k [2(1-K_k C_k)e_{k|k-1}+m_k]}{P_{k+1|k}} \le b_k |\hat{z}_{k|k-1} - z_{k}|^3
\label{Eq:le_mk}
\end{equation}
for any $a_k'\in (0,2)$.
\label{Lem:mk}
\end{lemma}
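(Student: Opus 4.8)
The plan is to bound the left-hand side numerator via the triangle inequality and then exploit the explicit closed forms that the design choice $R_k=\phi_k P_{k|k-1}{C_k}^2$ in~\eqref{Eq:Rk} forces on the gain. First I would record the three identities that drive everything. From~\eqref{Eq:gain} together with~\eqref{Eq:Rk} one gets $K_kC_k=\frac{1}{1+\phi_k}$, hence $1-K_kC_k=\frac{\phi_k}{1+\phi_k}$; and from~\eqref{Eq:PKC} the lower bound $P_{k+1|k}\ge P_{k|k-1}(1-K_kC_k)=\frac{\phi_k}{1+\phi_k}P_{k|k-1}$, so that $\frac{1}{P_{k+1|k}}\le\frac{1+\phi_k}{\phi_k P_{k|k-1}}$. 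Combining~\eqref{Eq:mk} with $C_k=-\frac{1}{e\hat z_{k|k-1}}$ from~\eqref{Eq:ck} gives $K_k=-\frac{e\hat z_{k|k-1}}{1+\phi_k}$, and feeding in the residual bound $|\chi(z_k,\hat z_{k|k-1})|\le\frac{(\hat z_{k|k-1}-z_k)^2}{2ea_k\hat z_{k|k-1}^2}$ established just above~\eqref{Eq:error_k+1} yields the crucial envelope $|m_k|\le\frac{(\hat z_{k|k-1}-z_k)^2}{2a_k(1+\phi_k)\hat z_{k|k-1}}$.

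Next I would apply $m_k[2(1-K_kC_k)e_{k|k-1}+m_k]\le 2(1-K_kC_k)|e_{k|k-1}|\,|m_k|+|m_k|^2$ and substitute the envelope. Writing $|e_{k|k-1}|=|\hat z_{k|k-1}-z_k|$, the cross term collapses to a clean cubic $\frac{\phi_k}{a_k(1+\phi_k)^2\hat z_{k|k-1}}|e_{k|k-1}|^3$, whereas the square term $|m_k|^2$ produces a quartic $\frac{1}{4a_k^2(1+\phi_k)^2\hat z_{k|k-1}^2}|e_{k|k-1}|^4$, which does not yet match the cubic right-hand side.

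The main obstacle, and the one genuinely nonroutine step, is reconciling this quartic $|m_k|^2$ term with the cubic bound the statement demands. Here I would invoke the standing assumption~\eqref{Eq:a'}, namely $z_k=a'_k\hat z_{k|k-1}$, which gives $|e_{k|k-1}|=|a'_k-1|\hat z_{k|k-1}$; peeling off one factor $|e_{k|k-1}|=|a'_k-1|\hat z_{k|k-1}$ converts $|e_{k|k-1}|^4$ into $|a'_k-1|\hat z_{k|k-1}|e_{k|k-1}|^3$ and cancels one power of $\hat z_{k|k-1}$ in the denominator, leaving $\frac{|a'_k-1|}{4a_k^2(1+\phi_k)^2\hat z_{k|k-1}}|e_{k|k-1}|^3$. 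Finally I would add the two cubic contributions, factor out $\frac{|e_{k|k-1}|^3}{(1+\phi_k)^2\hat z_{k|k-1}}$, multiply through by the bound $\frac{1+\phi_k}{\phi_k P_{k|k-1}}$ on $\frac{1}{P_{k+1|k}}$, and collect the bracket $\frac{\phi_k}{a_k}+\frac{|a'_k-1|}{4a_k^2}=\frac{4a_k\phi_k+|a'_k-1|}{4a_k^2}$, which is precisely the numerator of $b_k$; the surviving denominator is $4a_k^2\phi_k(1+\phi_k)\hat z_{k|k-1}P_{k|k-1}$, so the product is exactly $b_k|\hat z_{k|k-1}-z_k|^3$ as claimed. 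I would remark that the hypothesis $a'_k\in(0,2)$ is used in two places: it guarantees convergence of the series defining $\chi$ so that the envelope above holds, and it makes $a_k$ in~\eqref{Eq:aa'} well defined and strictly positive, keeping $b_k$ finite.
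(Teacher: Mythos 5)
Your proposal is correct and follows essentially the same route as the paper's proof: both rest on the closed-form gain identities forced by $R_k=\phi_k P_{k|k-1}C_k^2$, the lower bound $P_{k+1|k}\ge P_{k|k-1}(1-K_kC_k)$, the envelope on $\chi(z_k,\hat z_{k|k-1})$, and the degree-reduction step $|e_{k|k-1}|=|a'_k-1|\hat z_{k|k-1}$ that turns the quartic term into a cubic. The only difference is presentational — you split the product into cross and square terms via the triangle inequality, while the paper carries the bracket through as a whole — and your version even clarifies a typo in the paper's final line, where $P_{k|k}$ should read $P_{k|k-1}$.
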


\begin{IEEEproof}
From~\eqref{Eq:mk}, we get the following expansion
\begin{multline}
\frac{m_k [2(1-K_k C_k)e_{k|k-1}+m_k]}{P_{k+1|k}} \\
= \frac{1}{P_{k+1|k}} \frac{-P_{k|k-1} C_k}{P_{k|k-1} {C_k}^2 + R_k} \chi(z_k,\hat z_{k|k-1})\\
  \cdot \Bigg[ 2\left( 1-  \frac{P_{k|k-1} {C_k}^2}{P_{k|k-1} {C_k}^2 + R_k}\right) e_{k|k-1} \\
   - \frac{P_{k|k-1} C_k}{P_{k|k-1} {C_k}^2 + R_k} \chi(z_k,\hat z_{k|k-1}) \Bigg].
\end{multline}
It then follows from~\eqref{Eq:ck},~\eqref{Eq:a'} and~\eqref{Eq:PKC} that
\begin{align*}
&\frac{m_k [2(1-K_k C_k)e_{k|k-1}+m_k]}{P_{k+1|k}} \\
&\le \frac{1}{P_{k|k-1}( 1-K_k C_k)} \frac{-P_{k|k-1} C_k}{P_{k|k-1} {C_k}^2 + R_k}
          \frac{(\hat{z}_{k|k-1} - z_{k})^2}{2e a_k { \hat{z}_{k|k-1}}^2}\\
  &\cdot \Bigg[ 2\left( 1-  \frac{P_{k|k-1} {C_k}^2}{P_{k|k-1} {C_k}^2 + R_k}\right) |\hat{z}_{k|k-1} - z_{k}| \\
   &  - \frac{P_{k|k-1} C_k}{P_{k|k-1} {C_k}^2 + R_k}
             \frac{(\hat{z}_{k|k-1} - z_{k})^2}{2e a_k { \hat{z}_{k|k-1}}^2} \Bigg] \\
&\le \frac{1+\phi_{k}}{\phi_{k} P_{k|k-1}} \frac{-1}{C_k (1+\phi_{k})}
               \frac{(\hat{z}_{k|k-1} - z_{k})^2}{2e a_k { \hat{z}_{k|k-1}}^2} \\
     & \cdot \Bigg[ \frac{2\phi_{k}|\hat{z}_{k|k-1} - z_{k}|}{1+\phi_{k}}
     - \frac{1}{C_k (1+\phi_{k})}
               \frac{(\hat{z}_{k|k-1} - z_{k})^2}{2e a_k { \hat{z}_{k|k-1}}^2} \Bigg] \\
&\le \frac{1}{\phi_{k} P_{k|k}} \frac{|\hat{z}_{k|k-1} - z_{k}|^3}{2 a_k \hat{z}_{k|k-1}}
          \cdot  \frac{4 a_k \phi_{k}+|a'_k-1|}{2 a_k (1+\phi_{k})}.
\end{align*}
We are thus able to prove~\eqref{Eq:le_mk}.
\end{IEEEproof}

\begin{lemma}
$\displaystyle E \left[\frac{{s_k}^2}{P_{k+1|k}} \big| e_{k|k-1} \right]
\le \frac{\hat{z}_{k|k-1}}{\phi_{k} (1+\phi_{k}) P_{k|k-1}}$.
\label{Lem:E_s2}
\end{lemma}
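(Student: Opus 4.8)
The plan is to reduce the statement to a direct substitution of the closed forms of the Kalman gain and the linearized observation coefficient, followed by the uniform variance bound of Lemma~\ref{Lem:Upp_u}. First I would start from the definition $s_k = -K_k u_k$ in~\eqref{Eq:sk} and condition on $e_{k|k-1}$. Under this conditioning the gain $K_k$, the coefficient $C_k$, and the pseudo-covariances $P_{k|k-1}$ and $P_{k+1|k}$ are all deterministic (they are produced by the EKF covariance recursion, which is measurable with respect to the information available at the start of frame $k$), whereas $u_k$ is the fresh measurement noise, Gaussian with zero mean and variance $Var[u_k]$. Hence the expectation factorises as $E[s_k^2 \mid e_{k|k-1}] = K_k^2\, Var[u_k]$, and $P_{k+1|k}$ can be pulled out of the conditional expectation, giving $E\!\left[\tfrac{s_k^2}{P_{k+1|k}} \mid e_{k|k-1}\right] = \tfrac{K_k^2 Var[u_k]}{P_{k+1|k}}$.

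Next I would exploit the design choice $R_k = \phi_k P_{k|k-1}{C_k}^2$ from~\eqref{Eq:Rk}. Substituting it into~\eqref{Eq:gain} collapses the gain to $K_k = \tfrac{1}{C_k(1+\phi_k)}$, so that $K_k C_k = \tfrac{1}{1+\phi_k}$ and therefore $1-K_k C_k = \tfrac{\phi_k}{1+\phi_k}$. I would then lower-bound the denominator using~\eqref{Eq:PKC}, namely $P_{k+1|k} \ge P_{k|k-1}(1-K_k C_k) = P_{k|k-1}\tfrac{\phi_k}{1+\phi_k}$. Combining this with $K_k^2 = \tfrac{1}{{C_k}^2(1+\phi_k)^2}$ yields
\begin{equation}
\frac{K_k^2 Var[u_k]}{P_{k+1|k}} \le \frac{Var[u_k]}{{C_k}^2 \phi_k (1+\phi_k) P_{k|k-1}}.
\end{equation}

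Finally I would insert the explicit form $C_k = -\tfrac{1}{e\hat z_{k|k-1}}$ from~\eqref{Eq:ck}, i.e.\ $\tfrac{1}{{C_k}^2} = e^2 \hat z_{k|k-1}^2$, and apply Lemma~\ref{Lem:Upp_u}, which gives $Var[u_k] \le \tfrac{1}{L_k}\Lambda(\rho^*) = \tfrac{\Lambda(\rho^*)}{\hat z_{k|k-1}}$ with $\Lambda(\rho^*) < e^{-2}$ (using $L_k = \hat z_{k|k-1}$). The factor $e^2$ coming from $1/{C_k}^2$ then exactly cancels the $e^{-2}$ bounding $\Lambda(\rho^*)$, and one power of $\hat z_{k|k-1}$ cancels against the $\hat z_{k|k-1}$ in the denominator of the variance bound, leaving precisely $\tfrac{\hat z_{k|k-1}}{\phi_k(1+\phi_k)P_{k|k-1}}$, as claimed. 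The only real subtlety, and the step I would state most carefully, is the measurability argument in the first paragraph: one must justify that $K_k$, $C_k$ and $P_{k+1|k}$ are treated as known under the conditioning while $u_k$ retains its zero mean and its variance bound uniformly in $\rho$; everything thereafter is the routine algebraic cancellation just described.
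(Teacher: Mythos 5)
Your proof is correct and takes essentially the same route as the paper: both reduce the conditional expectation to $K_k^2 E[u_k^2]/P_{k+1|k}$, then combine the gain formula~\eqref{Eq:gain} under the design~\eqref{Eq:Rk}, the lower bound~\eqref{Eq:PKC} on $P_{k+1|k}$, the explicit $C_k$ from~\eqref{Eq:ck}, and the uniform variance bound of Lemma~\ref{Lem:Upp_u} with $\Lambda(\rho^*) < e^{-2}$ to cancel the $e^2$ factor. You merely spell out the algebra and the measurability of $K_k$, $C_k$, $P_{k+1|k}$ under the conditioning, which the paper compresses into a single substitution step.
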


\begin{proof}
From~\eqref{Eq:sk}, we have
\begin{equation}
E \left[\frac{s_k^2}{P_{k+1|k}} \big| e_{k|k-1} \right]
= \frac{K_k^2 E[u_k^2]}{P_{k+1|k}}.  \nonumber \\
\end{equation}
Substituting~\eqref{Eq:Upp_u},~\eqref{Eq:gain},~\eqref{Eq:PKC} and using Lemma~\ref{Lem:Upp_u} leads to
\begin{align}
E \left[\frac{s_k^2}{P_{k+1|k}} \big| e_{k|k-1} \right]
 &\le  \frac{e^2 \Lambda(\rho^*) \hat{z}_{k|k-1}}{\phi_{k} (1+\phi_{k}) P_{k|k-1}} \nonumber\\
 &\le  \frac{\hat{z}_{k|k-1}}{\phi_{k} (1+\phi_{k}) P_{k|k-1}},
\end{align}
which completes the proof.
\end{proof}

For simplification, we define $\xi_k$ as
\begin{equation}
\xi_k=  \frac{\hat{z}_{k|k-1}}{\phi_{k} (1+\phi_{k}) P_{k|k-1}}.
\label{Eq:xi_k}
\end{equation}

Armed with the above auxiliary lemmas, we next prove Theorem~\ref{Th:conditions}.

\begin{IEEEproof}[Proof of Theorem~\ref{Th:conditions}]
We use the following Lyapunov function to define the stochastic process:
\begin{equation}
V_k (e_{k|k-1}) = \frac{e_{k|k-1}^2}{P_{k|k-1}},
\end{equation}
which satisfies~\eqref{Eq:Tp_up} and \eqref{Eq:P0} as $P_{k|k-1}>0$.

We next use Lemma~\ref{Lem:Mstability} to develop the proof.
It follows from Lemma~\ref{Lem:bound_P} that the properties~\eqref{Eq:bou_v0} and \eqref{Eq:bou_vk} in Lemma~\ref{Lem:Mstability} are satisfied. Therefore, the main task left is to prove~\eqref{Eq:bou_Ev}.

From~\eqref{Eq:error_k+1}, we have
\begin{align}
V_{k+1}& (e_{k+1|k}) = \frac{e_{k+1|k}^2}{P_{k+1|k}}  = \frac{[(1- K_k C_k) e_{k|k-1} + s_k + m_k]^2}{P_{k+1|k}} \nonumber \\
=& \frac{(1-K_k C_k)^2}{P_{k+1|k}} e_{k|k-1}^2 + \frac{m_k [2(1-K_k C_k)e_{k|k-1}+m_k]}{P_{k+1|k}} \nonumber \\
 & + \frac{2s_k [(1-K_k C_k)e_{k|k-1}+m_k]}{P_{k+1|k}} + \frac{s_k^2}{P_{k+1|k}}.
\end{align}

By Lemmas~\ref{Lem:alpha},~\ref{Lem:mk} and~\ref{Lem:E_s2}, we have
\begin{multline}
E\left[V_{k+1} (e_{k+1|k}) |e_{k|k-1}\right]- V_{k}(e_{k|k-1})\\
\le  -\alpha_k V_{k}(e_{k|k-1})
     + b_k |e_{k|k-1}|^3
     + \xi_k.
\label{eq:aux1}
\end{multline}



We next proceed to bound the second term in $b_k$ in~\eqref{eq:aux1} as follows:
\begin{equation}
b_k |e_{k|k-1}|^3 \le \varsigma \alpha_k V_{k}(e_{k|k-1}),
\label{Eq:bound_b}
\end{equation}
where $0<\varsigma<1$ is preset controllable parameter. To prove the above inequality, we need to prove
\begin{equation}
|e_{k|k-1}| \le \frac{4\varsigma a_k^2 \phi_{k}\hat{z}_{k|k-1}}{4a_k \phi_{k}+|a'_k-1|}.
\end{equation}
Since $|e_{k|k-1}|=|a'_k-1|\hat{z}_{k|k-1}$, it suffices to show
\begin{equation}
|{a'_k}-1| \le \frac{4\varsigma a_k^2 \phi_{k}}{4{a_k} \phi_{k}+|a'_k-1|},
\label{Eq:cond_a}
\end{equation}
which is detailed as follows by distinguishing two cases:
\begin{itemize}
\item \textit{Case 1:} $0<{a'_k}\le 1$ when $a_k=a'_k$. In this case,~\eqref{Eq:cond_a} can be transformed into
    \begin{equation}
    (1-4 \phi_{k}-4 \phi_{k} \varsigma){a'_k}^2+(4\phi_{k}-2){a'_k}+1 \le 0
    \end{equation}
    With some algebraic operations, we obtain
        \begin{enumerate}
        \item $\frac{1-2\phi_{k}-2\sqrt{\phi_{k}(\phi_{k}+\varsigma)}}{1-4\phi_{k}(1+\varsigma)}<a'_k \le 1$, if $\phi_{k}<\frac{1}{4(1+\varsigma)}$,
        \item $\frac{2\phi_{k}-1+2\sqrt{\phi_{k}(\phi_{k}+\varsigma)}}{4\phi_{k}(1+\varsigma)-1} \le a'_k \le 1$, if $\phi_{k}>\frac{1}{4(1+\varsigma)}$,
        \item $\frac{1+\varsigma}{1+2\varsigma} \le a'_k \le 1$, if $\phi_{k}=\frac{1}{4(1+\varsigma)}$.
        \end{enumerate}
    Since it holds that $\frac{2\phi_{k}-1+2\sqrt{\phi_{k}(\phi_{k}+\varsigma)}}{4\phi_{k}(1+\varsigma)-1} <
      \frac{1+\varsigma}{1+2\varsigma}$ for every $\varsigma$ such that for $\phi_{k}\ge\frac{1}{4(1+\varsigma)}$,
    we have
        \begin{equation}
        \frac{1+\varsigma}{1+2\varsigma} \le a'_k \le 1.
        \end{equation}
\item \textit{Case 2:} $1<{a'_k}<2$ when $a_k=2-a'_k$. In this case,~\eqref{Eq:cond_a} can be transformed into
        \begin{multline}
        (1-4 \phi_{k}-4 \phi_{k} \varsigma){a'_k}^2+(12\phi_{k}+16\phi_{k} \varsigma -2){a'_k} \\
         +1-8\phi_{k}-16\phi_{k}\varsigma \le 0.
        \end{multline}
    With some algebraic operations, we get
        \begin{enumerate}
        \item $1<{a'_k}<\frac{1-6\phi_{k}-8\phi_{k}\varsigma
        +2\sqrt{\phi_{k}(\phi_{k}+\varsigma)}}{1-4\phi_{k}(1+\varsigma)}$, if $\phi_{k}<\frac{1}{4(1+\varsigma)}$,
        \item $1<{a'_k}<\frac{6\phi_{k}+8\phi_{k}\varsigma-1
        -2\sqrt{\phi_{k-1}(\phi_{k}+\varsigma)}}{4\phi_{k}(1+\varsigma)-1}$, if $\phi_{k}>\frac{1}{4(1+\varsigma)}$,
        \item $ 1< a'_k \le \frac{1+3\varsigma}{1+2\varsigma}$, if $\phi_{k}=\frac{1}{4(1+\varsigma)}$.
        \end{enumerate}
    Similar to Case 1, since $\frac{6\phi_{k}+8\phi_{k}\varsigma-1
        -2\sqrt{\phi_{k}(\phi_{k}+\varsigma)}}{4\phi_{k}(1+\varsigma)-1}>\frac{1+3\varsigma_k}{1+2\varsigma} $ if $\phi_{k}\ge\frac{1}{4(1+\varsigma)}$, we have
        \begin{equation}
        1< a'_k \le \frac{1+3\varsigma}{1+2\varsigma}.
        \end{equation}
\end{itemize}
It follows from the analysis of the two cases that if we set
\begin{equation}
\phi_{k} \ge \frac{1}{4(1+\varsigma)},
\label{Eq:phi_var}
\end{equation}
\eqref{Eq:cond_a} can be satisfied. Moreover, it holds that
\begin{equation}
|a'_k -1| \le \frac{\varsigma}{1+2\varsigma}.
\label{Eq:bou_a'}
\end{equation}
That is,
\begin{equation}
|e_{k|k-1}| \le \epsilon_k,
\label{Eq:epsilon_k}
\end{equation}
where $\epsilon_k \triangleq \frac{\varsigma}{1+2\varsigma} \hat{z}_{k|k-1}$.

By setting $\phi_k$ in~\eqref{Eq:phi_var}, we have
\begin{multline}
E\left[V_{k+1} (e_{k+1|k}) |e_{k|k-1}\right]- V_{k}(e_{k|k-1}) \\
\le  -(1-\varsigma)\alpha_k V_{k}(e_{k|k-1})
     + \xi_k,
\label{Eq:convergence_rate}
\end{multline}
for $|e_{k|k-1}| \le \epsilon_k$.

Therefore, we are able to apply Lemma~\ref{Lem:Mstability} to prove Theorem~\ref{Th:conditions} by setting $\epsilon = \frac{\varsigma}{1+2\varsigma} \hat{z}_{1|0}$, $\beta^* = \frac{1}{Q_0}$, $\alpha^*_k = (1-\varsigma)\alpha_k$, $\beta_k = \frac{1}{\overline p_k}$ and $\tau_k =\xi_k$.
\end{IEEEproof}

\begin{remark}
Theorem~\ref{Th:conditions} also holds in the sense of Lemma~\ref{Lem:stability} (the off-line version of Lemma~\ref{Lem:Mstability}) by setting the parameters in \eqref{Eq:stb_ineq} as
$\overline \beta = \frac{1}{Q_0}$,
$\alpha = \frac{1-\varsigma}{1+{\overline \phi}} \le \alpha^*_k $,
$\uline \beta = (P_{0|0} + Q_0 + {\overline q} ({\overline \phi} + 1) \ge \overline p_k$, and
$\tau = \frac{Q_0 \hat{z}_{max}}{\uline \phi (1+ \uline \phi)} \ge \xi_k $, where $\hat{z}_{max}$ is the maximum estimate.
\end{remark}

We conclude the analysis on the performance of our estimation algorithm for the static case with a more profound investigation on the evolution of the estimation error $|e_{k|k-1}|$. More specifically, we can distinguish three regions:
\begin{itemize}
\item \textit{Region 1:} $\sqrt{\frac{M \hat{z}_{k|k-1}}{\phi_{k}(M-1)(1-\varsigma)}} \le |e_{k|k-1}| \le \epsilon_k$. By substituting the condition into the right hand side of~\eqref{Eq:convergence_rate}, we obtain:
    \begin{equation*}
    -(1-\varsigma)\alpha_k V_{k}(e_{k|k-1}) + \xi_k
    \le  -\frac{(1-\varsigma)\alpha_k}{M} V_{k}(e_{k|k-1}),
    \end{equation*}
    where $M>1$ is a positive constant and can be set beforehand. It then follows that
    \begin{equation*}
    E\left[V_{k+1} (e_{k+1|k}) |e_{k|k-1}\right]
    \le \left(1- \frac{(1-\varsigma)\alpha_k}{M} \right) V_{k}(e_{k|k-1}).
    \end{equation*}
    Consequently, we can bound $E[e_{k|k-1}^2]$ as:
    \begin{equation}
    E[e_{k|k-1}^2] \le  \frac{\overline p_k}{Q_0} E[{e_{1|0}}^2]\prod_{i=1}^{k-1} {(1-\alpha^*_{i})}
    \label{Eq:exp}
    \end{equation}
    with $\alpha^*_k = \frac{(1-\varsigma)\alpha_k}{M}$. It can then be noted that $E[e_{k|k-1}^2] \to 0$ at an exponential rate as $k \to \infty$.
\item \textit{Region 2:} $\sqrt{\frac{\hat{z}_{k|k-1}}{\phi_{k}(1-\varsigma)}} \le |e_{k|k-1}| < \sqrt{\frac{M \hat{z}_{k|k-1}}{\phi_{k}(M-1)(1-\varsigma)}}$. In this case, we have
    \begin{equation*}
    -\frac{(1-\varsigma)\alpha_k}{M} V_{k}(e_{k|k-1})
    <-(1-\varsigma)\alpha_k V_{k}(e_{k|k-1}) + \xi_k  \le 0.
    \end{equation*}
    It then follows from Lemma~\ref{Lem:Mstability} that
    \begin{multline}
    E[e_{k|k-1}^2] \le \frac{\overline p_k}{Q_0} E[{e_{1|0}}^2]\prod_{i=1}^{k-1} {(1-\alpha^*_{i})} \\
    + \overline p_k \sum_{i=1}^{k-2} \xi_{k-i-1}\prod_{j=1}^{i}{(1-\alpha^*_{k-j})}.
    \end{multline}
    Hence, when $k \to \infty$, $E[e_{k|k-1}^2]$ converges at exponential rate to $\overline p_k \sum_{i=1}^{k-2} \xi_{k-i-1}\prod_{j=1}^{i}{(1-\alpha^*_{k-j})}\sim \Theta(\hat{z}_{k|k-1})$, which is decoupled with the initial estimation error and it thus holds $\displaystyle\frac{E[e_{k|k-1}]}{z_k} = \Theta \Big(\frac{1}{\sqrt{z_k}}\Big)\to 0$ when $z_k \to \infty$.

\item \textit{Region 3:} $0 \le |e_{k|k-1}| < \sqrt{\frac{\hat{z}_{k|k-1}}{\phi_{k}(1-\varsigma)}}$. In this case, we can show that the right hand side of~\eqref{Eq:convergence_rate} is positive, i.e.,
    \begin{equation*}
    -(1-\varsigma)\alpha_k V_{k}(e_{k|k-1}) + \xi_k >0.
    \end{equation*}
    It also follows from Lemma~\ref{Lem:Mstability} that
    \begin{multline}
    E[e_{k|k-1}^2] \le \frac{\overline p_k}{Q_0} E[{e_{1|0}}^2]\prod_{i=1}^{k-1} {(1-\alpha^*_{i})} \\
    + \overline p_k \sum_{i=1}^{k-2} \xi_{k-i-1}\prod_{j=1}^{i}{(1-\alpha^*_{k-j})}.
    \end{multline}
    Hence, when $k \to \infty$, $E[e_{k|k-1}^2]$ converges exponentially to $\overline p_k \sum_{i=1}^{k-2} \xi_{k-i-1}\prod_{j=1}^{i}{(1-\alpha^*_{k-j})} \sim \Theta(\hat{z}_{k|k-1})$, which is decoupled with the initial estimation error and it thus holds $\displaystyle\frac{E[e_{k|k-1}]}{z_k} \le \Theta \Big(\frac{1}{\sqrt{z_k}}\Big)\to 0$ when $z_k \to \infty$.
\end{itemize}
Combining the above three regions, we get the following results on the convergence of the expected estimation error $E[e_{k|k-1}]$: (1) if the estimation error is small (Region 3), it will converge to a value smaller than $\Theta(\sqrt {\hat{z}_{k|k-1}})$ as analysed in Region 3; (2) if the estimation error is larger (Region 1), it will decrease as analysed in Region 1 and fall into either Region 2 or Region 3 where $E[e_{k|k-1}] \le \Theta(\sqrt{\hat{z}_{k|k-1}})$ such that the relative estimation error $\frac{E[e_{k|k-1}]}{{z}_{k}} \to 0$ when  ${z}_{k}\to\infty$.

\subsection{Dynamic Case}

Our analysis on the stability of Algorithm~\ref{Al:SEKF} for the dynamic case is also composed of two steps. First, we derive the estimation error. Second, we establish the stability and the accuracy of Algorithm~\ref{Al:SEKF} in terms of the boundedness of estimation error.


We first derive the dynamics of the estimation error as follows:
\begin{align}
e_{k+1|k}
= (1- K_k C_k) e_{k|k-1} + s_k + m_k,
\label{Eq:error_k+1-w}
\end{align}
which differs from the static case~\eqref{Eq:error_k+1} in $s_k$. In the dynamic case, we have
\begin{align}
s_k &= w_k - K_k u_k
\label{Eq:sk-w}
\end{align}

Next, we show the boundedness of the estimation error in Theorem~\ref{Th:conditions-w}.

\begin{theorem}
Under the conditions of Algorithm~\ref{Al:SEKF-in}, consider the discrete-time stochastic system given by~\eqref{Eq:system_state} and~\eqref{Eq:measurement} and Algorithm~\ref{Al:SEKF}, if there exist time-varying positive real number $\lambda_k$, $\sigma_k>0$ such that
\begin{align}
    E[w_k] &\le  \lambda_k, \\
    E[{w_k}^2] &\le \sigma_k,
\end{align}
then the estimation error $e_{k|k-1}$ defined by~\eqref{Eq:error} is exponentially bounded in mean square and bounded w.p.o..
\label{Th:conditions-w}
\end{theorem}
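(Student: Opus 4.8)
The plan is to mirror the proof of Theorem~\ref{Th:conditions}, reusing the same Lyapunov function $V_k(e_{k|k-1}) = e_{k|k-1}^2 / P_{k|k-1}$ and isolating precisely the terms that change once the process noise $w_k$ enters the error dynamics. The dynamic error recursion \eqref{Eq:error_k+1-w} has exactly the same structure $e_{k+1|k} = (1-K_kC_k)e_{k|k-1} + s_k + m_k$ as the static case, the only difference being $s_k = w_k - K_k u_k$ from \eqref{Eq:sk-w} instead of $s_k = -K_k u_k$, so most of the machinery carries over verbatim. In particular, the recursion for the pseudo-covariance $P_{k|k-1}$ does not involve $w_k$, so Lemma~\ref{Lem:bound_P} still yields the two-sided bound \eqref{Eq:bound_P}, which establishes the Lyapunov bounds \eqref{Eq:bou_v0} and \eqref{Eq:bou_vk} required by Lemma~\ref{Lem:Mstability}. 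The whole task thus reduces to re-deriving the drift condition \eqref{Eq:bou_Ev}.

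When expanding $E[V_{k+1}(e_{k+1|k}) \mid e_{k|k-1}]$, the term $(1-K_kC_k)^2 e_{k|k-1}^2 / P_{k+1|k}$ and the $m_k$-cross term are untouched by $w_k$, so Lemmas~\ref{Lem:alpha} and~\ref{Lem:mk} apply unchanged and still produce $-\alpha_k V_k$ together with the cubic remainder $b_k|e_{k|k-1}|^3$; the latter is absorbed into $\varsigma\alpha_k V_k$ exactly as before, provided $\phi_k$ satisfies \eqref{Eq:phi_var} under the parameter bounds of Algorithm~\ref{Al:SEKF-in}. Only the $s_k$-dependent contributions differ, and they split into two pieces. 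First, the quadratic term: using the standard uncorrelatedness of process and measurement noise, $E[s_k^2 \mid e_{k|k-1}] = K_k^2 E[u_k^2] + E[w_k^2]$, hence $E[s_k^2/P_{k+1|k} \mid e_{k|k-1}] \le \xi_k + \sigma_k/\uline p_{k+1}$, where the first summand is Lemma~\ref{Lem:E_s2} and the second is bounded through \eqref{Eq:bound_P}. Second, and this is the genuinely new term, the cross term $E[2 s_k\,((1-K_kC_k)e_{k|k-1}+m_k)/P_{k+1|k} \mid e_{k|k-1}]$ no longer vanishes, because $E[s_k \mid e_{k|k-1}] = E[w_k] \neq 0$.

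To control the cross term I would bound its magnitude by $2|D_k|\,|E[w_k]|/P_{k+1|k}$ with $D_k \triangleq (1-K_kC_k)e_{k|k-1}+m_k$, use $|E[w_k]| \le \lambda_k$ (equivalently $|E[w_k]|\le\sqrt{\sigma_k}$ by Jensen), and then apply Young's inequality $2ab \le \eta a^2 + b^2/\eta$. Since $D_k^2/P_{k+1|k}$ is already dominated by $(1-\alpha_k)V_k$ plus the absorbed cubic term, a sufficiently small $\eta>0$ lets the $\eta a^2$ piece be swallowed by the Lyapunov decay while the $b^2/\eta$ piece becomes the bounded constant $\lambda_k^2/(\eta\,\uline p_{k+1})$. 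Collecting everything yields a drift inequality of the required form $E[V_{k+1}\mid e_{k|k-1}] - V_k \le -\alpha^*_k V_k + \tau_k$ with $\alpha^*_k = (1-\varsigma)\alpha_k - \eta(1-\alpha_k) \in (0,1]$ and $\tau_k = \xi_k + \sigma_k/\uline p_{k+1} + \lambda_k^2/(\eta\,\uline p_{k+1})$; since $\lambda_k$, $\sigma_k$ and $\uline p_{k+1}$ are all bounded, $\tau_k$ is bounded, so Lemma~\ref{Lem:Mstability} delivers both exponential boundedness in mean square and boundedness w.p.o.

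I expect the main obstacle to be the cross term. Unlike the zero-mean measurement noise, $w_k$ has a drifting mean, so the sign of $E[w_k]$ is uncontrolled and the term must be handled through its absolute value and a carefully tuned Young's inequality; the delicate point is ensuring the absorbed fraction $\eta(1-\alpha_k)$ stays strictly below the decay budget $(1-\varsigma)\alpha_k$ uniformly in $k$, so that $\alpha^*_k$ remains a valid exponent in $(0,1]$ for every $k$. A secondary point worth verifying is that the CUSUM-driven choice of $\phi_k$, taking values in $\{\uline\phi,\overline\phi\}$, still respects \eqref{Eq:phi_var}, which is what legitimizes reusing the cubic-term absorption and the resulting drift \eqref{Eq:convergence_rate} from the static analysis.
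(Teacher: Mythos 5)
Your proposal is correct and follows the paper's overall route: the same Lyapunov function $V_k=e_{k|k-1}^2/P_{k|k-1}$, the same error recursion, and the same reuse of Lemmas~\ref{Lem:bound_P}, \ref{Lem:alpha}, \ref{Lem:mk} and~\ref{Lem:E_s2}, so that everything reduces to the quadratic term in $s_k$ and the no-longer-vanishing cross term. The difference lies in how that cross term is treated. The paper proves a dedicated bound (Lemma~\ref{Lem:E_w}), namely $E\bigl[2s_kD_k/P_{k+1|k}\mid e_{k|k-1}\bigr]\le d_k|e_{k|k-1}|E[w_k]$ with the explicit $d_k$ of~\eqref{Eq:dk}, and then places this error-dependent quantity inside the constant when invoking Lemma~\ref{Lem:Mstability}, i.e.\ $\tau_k=\xi_k^*+d_k|e_{k|k-1}|\lambda_k$; this is formally awkward, since Lemma~\ref{Lem:Mstability} expects $\tau_k$ to be a parameter rather than a function of the error, and it is only repaired because the drift is in any case established on the region $|e_{k|k-1}|\le\epsilon_k$. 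Your Young's-inequality absorption, $2|D_k|\lambda_k/P_{k+1|k}\le\eta D_k^2/P_{k+1|k}+\lambda_k^2/(\eta P_{k+1|k})$ with $D_k=(1-K_kC_k)e_{k|k-1}+m_k$, trades a slightly smaller decay exponent for a $\tau_k$ that genuinely does not depend on $e_{k|k-1}$, which is cleaner; your Jensen bound $|E[w_k]|\le\sqrt{\sigma_k}$ also covers the case $E[w_k]<0$ uniformly, which the paper dismisses by restricting attention to $E[w_k]\ge0$. What the paper's explicit $d_k$ buys in exchange is the subsequent region analysis: the closed forms~\eqref{Eq:lambda} and~\eqref{Eq:sigma} for $\lambda_k$ and $\sigma_k$ come directly from comparing $d_k|e_{k|k-1}|E[w_k]$ and $\xi_k^*$ against a fraction of the decay, a link that your absorbed constant obscures. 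Two small points to fix in your write-up: the exponent should read $\alpha_k^*=(1-\varsigma)\alpha_k-\eta\bigl(1-(1-\varsigma)\alpha_k\bigr)$, since the Young piece is controlled by the full bound $\bigl(1-(1-\varsigma)\alpha_k\bigr)V_k$ on $D_k^2/P_{k+1|k}$ (decay plus absorbed cubic term), not by $(1-\alpha_k)V_k$ alone --- uniform positivity for small $\eta$ still follows from $\alpha_k\ge1/(1+\overline\phi)$; and, exactly like the paper, your drift inequality only holds on $|e_{k|k-1}|\le\epsilon_k$, a restriction inherited from Lemma~\ref{Lem:mk} that should be stated when invoking Lemma~\ref{Lem:Mstability}.
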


\begin{remark}
Note that the condition $E[w_k] \le \lambda_k$ always holds for $E[w_k]<0$, we thus focus on the case that $E[w_k]\ge0$. In the proof, the explicit formulas of $\lambda_k$ and $\sigma_k$ are derived. As in the static case, the conditions may be too stringent such that the results still hold even if the conditions are not satisfied, as illustrated in the simulations.
\end{remark}

The proof of Theorem~\ref{Th:conditions-w} is also based on Lemmas~\ref{Lem:alpha},~\ref{Lem:mk} and~\ref{Lem:E_s2}, but due to the introduction of $w_k$ into $s_k$, we need another two auxiliary lemmas on $E[s_k]$ and $E[s_k^2]$.

\begin{lemma}
If $E[w_k] \ge 0$, then there exists a time-varying real number $d_k>0$ such that
\begin{multline}
E \left[\frac{2s_k [(1-K_k C_k)e_{k|k-1}+m_k]}{P_{k+1|k}} \Big| e_{k|k-1} \right] \\
    \le d_k |e_{k|k-1}| E[w_k].
\end{multline}
\label{Lem:E_w}
\end{lemma}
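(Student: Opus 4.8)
The plan is to exploit the fact that, once we condition on $e_{k|k-1}$, every filter quantity appearing in the bracket --- namely $K_k$, $C_k$, $m_k$ and $P_{k+1|k}$ --- is deterministic, since each is a function of $\hat z_{k|k-1}$ and $P_{k|k-1}$, which are fixed by the history up to the start of frame $k$. Thus the only residual randomness sits in $s_k = w_k - K_k u_k$. First I would pull the deterministic factor out of the conditional expectation and compute $E[s_k \mid e_{k|k-1}]$. Because $u_k$ is zero-mean measurement noise independent of the past, $E[K_k u_k \mid e_{k|k-1}] = K_k E[u_k \mid e_{k|k-1}] = 0$, while $w_k$, the variation during frame $k$, is independent of the state at the beginning of frame $k$, so $E[w_k \mid e_{k|k-1}] = E[w_k]$. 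Hence $E[s_k \mid e_{k|k-1}] = E[w_k]$, and the left-hand side collapses to $\frac{2 E[w_k]\,[(1-K_k C_k)e_{k|k-1}+m_k]}{P_{k+1|k}}$.

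The second step uses the hypothesis $E[w_k]\ge 0$ in an essential way: since the scalar factor $E[w_k]$ is nonnegative, I can upper bound the signed bracket by its absolute value, reducing the problem to showing $\left|\frac{2[(1-K_k C_k)e_{k|k-1}+m_k]}{P_{k+1|k}}\right| \le d_k |e_{k|k-1}|$ for a suitable $d_k>0$. This is a purely deterministic estimate.

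For that estimate I would substitute the closed forms already available. From $R_k = \phi_k P_{k|k-1} C_k^2$ one gets $K_k C_k = \frac{1}{1+\phi_k}$, so $1-K_k C_k = \frac{\phi_k}{1+\phi_k}$; likewise $|K_k| = \frac{e\hat z_{k|k-1}}{1+\phi_k}$. Combining $|K_k|$ with the residual bound $|\chi(z_k,\hat z_{k|k-1})| \le \frac{e_{k|k-1}^2}{2e a_k \hat z_{k|k-1}^2}$ from the computation preceding~\eqref{Eq:error_k+1} yields $|m_k| \le \frac{e_{k|k-1}^2}{2 a_k (1+\phi_k)\hat z_{k|k-1}}$. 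For the denominator I would use $P_{k+1|k} \ge P_{k|k-1}(1-K_k C_k) = \frac{\phi_k}{1+\phi_k}P_{k|k-1}$ from~\eqref{Eq:PKC}, i.e. $\frac{1}{P_{k+1|k}} \le \frac{1+\phi_k}{\phi_k P_{k|k-1}}$. Plugging these in and using $|e_{k|k-1}| = |a'_k-1|\,\hat z_{k|k-1}$ to rewrite $\frac{e_{k|k-1}^2}{\hat z_{k|k-1}} = |a'_k-1|\,|e_{k|k-1}|$, the whole bracket factors as $|e_{k|k-1}|$ times $\frac{2 a_k \phi_k + |a'_k-1|}{a_k \phi_k P_{k|k-1}}$, giving the explicit constant $d_k = \frac{2 a_k \phi_k + |a'_k-1|}{a_k \phi_k P_{k|k-1}}>0$.

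The only delicate point, and the step I would treat most carefully, is the conditional-expectation factorisation in the first paragraph: it rests on $w_k$ and $u_k$ being independent of the $\sigma$-algebra generated by the history up to frame $k$ (so that conditioning on $e_{k|k-1}$ leaves their means untouched) and on $K_k, C_k, m_k, P_{k+1|k}$ being measurable with respect to that history. Once this measurability/independence structure is spelled out, the remaining arithmetic is routine and parallels the bound for $m_k$ in Lemma~\ref{Lem:mk}.
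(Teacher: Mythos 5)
Your proof is correct and follows essentially the same route as the paper's: you factor the conditional expectation using $E[u_k]=0$ and the independence of $w_k$ from the past, invoke $E[w_k]\ge 0$ to pass to absolute values, and then apply the identical deterministic bounds ($1-K_kC_k = \frac{\phi_k}{1+\phi_k}$, the residual bound on $|m_k|$, and $P_{k+1|k}\ge \frac{\phi_k}{1+\phi_k}P_{k|k-1}$ from~\eqref{Eq:PKC}), arriving at exactly the paper's constant $d_k = \frac{2a_k\phi_k+|a'_k-1|}{a_k\phi_k P_{k|k-1}}$ in~\eqref{Eq:dk}. If anything, your treatment of the conditioning (interpreting it as conditioning on the history $\sigma$-algebra so that $K_k$, $C_k$, $m_k$, $P_{k+1|k}$ are measurable) is spelled out more carefully than in the paper's terse derivation.
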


\begin{proof}
When $E[w_k] \ge 0$, from $E[v_k]=0$,~\eqref{Eq:a'},~\eqref{Eq:PKC} and the independence between $w_k$ and $e_{k|k-1}$, we can derive
\begin{align}
&E \left[\frac{2s_k [(1-K_k C_k)e_{k|k-1}+m_k]}{P_{k+1|k}} \Big| e_{k|k-1} \right]  \nonumber\\
&\le  2 E[w_k] \frac{1+\phi_{k}}{\phi_{k} P_{k|k-1}} \left[ \frac{\phi_{k} |e_{k|k-1}|}{1+\phi_{k}}
          + \frac{|e_{k|k-1}|^2}{2a(1+\phi_{k}) \hat z_{k|k-1}}\right] \nonumber\\
&\le  E[w_k] \frac{2a_k \phi_{k}+|a'_k-1|}{a_k \phi_{k}P_{k|k-1}} |e_{k|k-1}|.
\end{align}

We thus complete the proof by setting
\begin{equation}
d_k = \frac{2a_k\phi_{k}+|a'_k-1|}{a_k \phi_{k}P_{k|k-1}}.
\label{Eq:dk}
\end{equation}
\end{proof}

\begin{lemma}
There exists a time-varying parameter $\xi_k^*>0$ such that
$E \left[\frac{{s_k}^2}{P_{k+1|k}} \big| e_{k|k-1} \right] \le \xi_k^*$.
\label{Lem:E_w2}
\end{lemma}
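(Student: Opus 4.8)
The plan is to reduce the dynamic bound to the static bound of Lemma~\ref{Lem:E_s2} plus a separate term that accounts for the process noise $w_k$. Starting from the dynamic expression~\eqref{Eq:sk-w}, namely $s_k = w_k - K_k u_k$, I would first expand the square,
\[
s_k^2 = w_k^2 - 2 K_k w_k u_k + K_k^2 u_k^2,
\]
then take the conditional expectation given $e_{k|k-1}$ and divide through by $P_{k+1|k}$. This splits the quantity of interest into three pieces: a process-noise term $E[w_k^2]/P_{k+1|k}$, a cross term involving $E[w_k u_k]$, and the measurement-noise term $K_k^2 E[u_k^2]/P_{k+1|k}$ that already appeared in the static case.

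The key observation, and the step requiring the most care, is that the cross term vanishes. Since $u_k$ is the zero-mean measurement noise of frame $k$ and is uncorrelated with the process noise $w_k$ (the standard Kalman assumption), while the gain $K_k$ is a filter quantity fixed by $P_{k|k-1}$, $C_k$ and $R_k$ before the frame-$k$ measurement is taken, conditioning on $e_{k|k-1}$ leaves $E[K_k w_k u_k \mid e_{k|k-1}] = K_k\, E[w_k]\, E[u_k] = 0$ because $E[u_k]=0$. This argument is insensitive to the sign of $E[w_k]$, so the non-zero-mean regime $E[w_k]\ge 0$ treated in Theorem~\ref{Th:conditions-w} causes no additional difficulty. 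The measurement-noise term is then disposed of directly by Lemma~\ref{Lem:E_s2}, which yields $K_k^2 E[u_k^2]/P_{k+1|k} \le \xi_k$ with $\xi_k$ as in~\eqref{Eq:xi_k}.

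It remains to control the process-noise term $E[w_k^2]/P_{k+1|k}$. Here I would invoke the lower bound on the pseudo-covariance established inside the proof of Lemma~\ref{Lem:bound_P}, namely $P_{k+1|k}\ge Q_k$, together with the hypothesis $E[w_k^2]\le \sigma_k$ of Theorem~\ref{Th:conditions-w} and the fact that $Q_k=q>0$ in Algorithm~\ref{Al:SEKF}; this gives $E[w_k^2]/P_{k+1|k}\le \sigma_k/Q_k$. Combining the three estimates produces
\[
E\left[\frac{s_k^2}{P_{k+1|k}}\,\Big|\, e_{k|k-1}\right] \le \xi_k + \frac{\sigma_k}{Q_k},
\]
and the claim follows by setting $\xi_k^* \triangleq \xi_k + \sigma_k/Q_k > 0$. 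The only genuinely non-routine point is the justification that the cross term vanishes; once the independence and zero-mean structure of the noises is pinned down, the other two bounds are immediate consequences of results already proved.
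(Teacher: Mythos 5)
Your proof is correct and follows essentially the same route as the paper: the identical expansion $s_k^2 = w_k^2 - 2K_k w_k u_k + K_k^2 u_k^2$, the same argument that the cross term vanishes because $w_k$ and $u_k$ are uncorrelated and $E[u_k]=0$ with $e_{k|k-1}$ independent of both, and the same disposal of the measurement-noise term via the static bound of Lemma~\ref{Lem:E_s2}. The single point of divergence is the process-noise term: the paper keeps $E[w_k^2]$ explicit and lower-bounds $P_{k+1|k} \ge \frac{\phi_k}{1+\phi_k}P_{k|k-1}$ (combining~\eqref{Eq:PKC} with~\eqref{Eq:Rk}), arriving at $\xi_k^* = \frac{1+\phi_k}{\phi_k P_{k|k-1}}E[w_k^2] + \frac{\hat{z}_{k|k-1}}{\phi_k(1+\phi_k)P_{k|k-1}}$ as in~\eqref{Eq:xi_k-w}, whereas you use the cruder bound $P_{k+1|k}\ge Q_k = q$ together with the hypothesis $E[w_k^2]\le\sigma_k$; both establish the lemma as stated, but note that the paper's parameter-explicit form of $\xi_k^*$ is substituted later into the region analysis of Theorem~\ref{Th:conditions-w} to derive the closed-form expressions for $\lambda_k$ and $\sigma_k$, so your looser constant, while sufficient here, would not support those downstream computations without tightening.
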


\begin{proof}
By~\eqref{Eq:sk-w}, we have
\begin{equation}
s_k^2 = w_k^2 - 2K_k w_k u_k + K_k^2 u_k^2.
\end{equation}
Since $w_k$ and $u_k$ are uncorrelated and $e_{k|k-1}$ does not depend on either $w_k$ or $u_k$, we have
\begin{equation}
E \left[\frac{s_k^2}{P_{k+1|k}} \big| e_{k|k-1} \right]
= \frac{E[w_k^2]}{P_{k+1|k}} + \frac{K_k^2 E[u_k^2]}{P_{k+1|k}}.
\label{Eq:w_u}
\end{equation}
Substituting~\eqref{Eq:Upp_u},~\eqref{Eq:gain},~\eqref{Eq:PKC} and using Lemma~\ref{Lem:Upp_u}, noticing that $E[u_k]=0$, we get
\begin{multline}
E \left[\frac{s_k^2}{P_{k+1|k}} \big| e_{k|k-1} \right]
\le \frac{1+\phi_{k}}{\phi_{k}P_{k|k-1}}E[w_k^2]
    + \frac{e^2 \Lambda(\rho^*)\hat{z}_{k|k-1}}{\phi_{k} (1+\phi_{k}) P_{k|k-1}} \\
\le \frac{1+\phi_{k}}{\phi_{k}P_{k|k-1}}E[w_k^2]
    + \frac{\hat{z}_{k|k-1}}{\phi_{k} (1+\phi_{k}) P_{k|k-1}}.
\end{multline}
Finally, by setting $\xi_k^*$ as
\begin{equation}
\xi_k^*= \frac{1+\phi_{k}}{\phi_{k}P_{k|k-1}}E[{w_k}^2]
       + \frac{\hat{z}_{k|k-1}}{\phi_{k} (1+\phi_{k}) P_{k|k-1}},
\label{Eq:xi_k-w}
\end{equation}
we complete the proof.
\end{proof}

Armed with the above lemmas, we next prove Theorem~\ref{Th:conditions-w} by utilizing the same method with the proof of Theorem~\ref{Th:conditions}.

\begin{IEEEproof}[Proof of Theorem~\ref{Th:conditions-w}]
Recall~\eqref{Eq:sk} and~\eqref{Eq:sk-w}, we notice that the only difference between the estimation errors of Algorithms~\ref{Al:SEKF} and~\ref{Al:SEKF-in} is $s_k$. Therefore, it suffices to study the impact of $w_k$ on $V_k (e_{k|k-1})$.

It follows from Lemmas~\ref{Lem:alpha},~\ref{Lem:mk},~\ref{Lem:E_s2},~\ref{Lem:E_w} and~\ref{Lem:E_w2} that
\begin{multline}
E\left[V_{k+1} (e_{k+1|k}) |e_{k|k-1}\right]- V_{k}(e_{k|k-1})
\le  -\alpha_k V_{k}(e_{k|k-1})\\
     + b_k |e_{k|k-1}|^3
     + d_k |e_{k|k-1}| E[w_k]
     + \xi^*_k.
\end{multline}

Furthermore, bounding the second item in $b_k$ as~\eqref{Eq:bound_b} and given $\phi_k$ in~\eqref{Eq:phi_var}, yields
\begin{multline}
E\left[V_{k+1} (e_{k+1|k}) |e_{k|k-1}\right]- V_{k}(e_{k|k-1}) \\
\le  -(1-\varsigma)\alpha_k V_{k}(e_{k|k-1})
     + d_k |e_{k|k-1}| E[w_k]
     + \xi^*_k
\end{multline}
for $|e_{k|k-1}| \le \epsilon_k$.

And we can thus prove Theorem~\ref{Th:conditions-w} by setting $\epsilon = \frac{\varsigma}{1+2\varsigma} \hat{z}_{1|0}$, $\beta^* = \frac{1}{Q_0}$, $\alpha^*_k = (1-\varsigma)\alpha_k$, $\tau_k =\xi^*_k + d_k |e_{k|k-1}| \lambda_k$ and $\beta_k = \frac{1}{\overline p_k}$.
\end{IEEEproof}

We conclude the analysis on the performance of our estimation algorithm for the dynamic case with a more profound investigation on the evolution of the estimation error $|e_{k|k-1}|$ and derive the explicit formulas for $\lambda_k$ and $\sigma_k$. More specifically, we can distinguish three regions:
\begin{itemize}
\item \textit{Region 1}: $\sqrt{\frac{4M\hat{z}_{k|k-1}}{\phi_{k}(M-1)(1-\varsigma)}} \le |e_{k|k-1}| \le \epsilon_k$. In this case, the objective is to achieve
     \begin{multline}
          E\left[V_{k+1} (e_{k+1|k}) |e_{k|k-1}\right]- V_{k}(e_{k|k-1}) \\
           \le  -\frac{1}{M} (1-\varsigma)\alpha_k V_{k}(e_{k|k-1})
           \label{Eq:super_large}
      \end{multline}
      so that $E[e_{k|k-1}^2]$ is bounded as
       \begin{equation}
    E[e_{k|k-1}^2] \le  \frac{\overline p_k}{Q_0} E[{e_{1|0}}^2]\prod_{i=1}^{k-1} {(1-\alpha^*_{i})}.
    \label{Eq:exp-w}
    \end{equation}
That is, it should hold that
\begin{equation*}
d_k |e_{k|k-1}| E[w_k] + \xi_k^* \le \frac{M-1}{M} (1-\varsigma)\alpha_k V_{k}(e_{k|k-1}).
\end{equation*}
To that end, we firstly let the following inequalities hold
\begin{eqnarray}
\begin{cases}
d_k |e_{k|k-1}| E[w_k] \le \frac{M-1}{2M} (1-\varsigma)\alpha_k V_{k}(e_{k|k-1}), \\
\xi_k^* \le \frac{M-1}{2M} (1-\varsigma)\alpha_k V_{k}(e_{k|k-1}).
\end{cases}
\label{Eq:super_conditions}
\end{eqnarray}

Secondly, substituting~\eqref{Eq:dk},~\eqref{Eq:xi_k-w} into~\eqref{Eq:super_conditions} leads to
\begin{equation}
E[w_k] \le  \frac{a_k \phi_{k} (1-\varsigma) |e_{k|k-1}|}
                  {(1+\phi_{k})\left(2a_k\phi_{k}+|a'_k-1|\right)},
\label{Eq:Ew}
\end{equation}
\begin{eqnarray}
E[{w_k}^2] \le \frac{\phi_{k}(M-1)(1-\varsigma){|e_{k|k-1}|}^2- 2M\hat{z}_{k|k-1}}{2M(1+\phi_{k})^2}. 
\label{Eq:Ew2}
\end{eqnarray}
Thirdly, let
\begin{equation}
\frac{\phi_{k}(M-1)(1-\varsigma){|e_{k|k-1}|}^2}{2M(1+\phi_{k})^2}
                \ge \frac{2\hat{z}_{k|k-1}}{(1+\phi_{k})^2},
\label{Eq:w2-u2}
\end{equation}
and we thus have
\begin{align}
|e_{k|k-1}| &\ge \sqrt{\frac{4M\hat{z}_{k|k-1}}{\phi_{k}(M-1)(1-\varsigma)}} \triangleq \tilde \epsilon, \\
E[{w_k}^2]  &\le \frac{ \hat{z}_{k|k-1}}{(1+\phi_{k})^2} \triangleq \sigma_k. \label{Eq:sigma}
\end{align}
The rational behind can be interpreted as follows: i) the right term of~\eqref{Eq:Ew2} cannot be less than zero and ii) there always exists the measurement uncertainty in the system. Consequently, the impact of tag population change plus the measurement uncertainty should equal in order of magnitude that of only measurement uncertainty, which can be achieved by establishing $E[{w_k}^2] \le K_k^2 E[{u_k}^2]$ and \eqref{Eq:w2-u2} with reference to~\eqref{Eq:w_u} and~\eqref{Eq:xi_k-w}. \\

However, since $a'_k$ and $a_k$ are unknown a priori, we thus need to transform the right hand side of~\eqref{Eq:Ew} to a computable form. From~\eqref{Eq:aa'} and~\eqref{Eq:bou_a'}, we get
\begin{eqnarray*}
\begin{cases}
|a'_k - 1| = |a_k - 1| \\
|1 - \frac{1}{a_k}| \le \frac{\varsigma}{1+3\varsigma}
\end{cases}
\end{eqnarray*}
such that it holds for the right hand side of~\eqref{Eq:Ew} that
\begin{eqnarray*}
\frac{a_k \phi_{k} (1-\varsigma) |e_{k|k-1}|}
                  {3(1+\phi_{k})\left(2a_k\phi_{k}+|a'_k-1|\right)}
\ge \frac{\phi_{k} (1-\varsigma) \tilde \epsilon}
                  {3(1+\phi_{k})\left(2\phi_{k}+\frac{\varsigma}{1+3\varsigma}\right)}.
\end{eqnarray*}

Finally, let
\begin{eqnarray}
E[w_k]
\le \frac{\phi_{k} (1-\varsigma)  \tilde \epsilon}
                  {3(1+\phi_{k})\left(2\phi_{k}+\frac{\varsigma}{1+3\varsigma}\right)} \triangleq \lambda_k,
\label{Eq:lambda}
\end{eqnarray}

we can establish \eqref{Eq:exp-w} and thus get that $E[e_{k|k-1}^2] \to 0$ at an exponential rate when $k \to \infty$.

\item \textit{Region 2:} $\sqrt{\frac{4\hat{z}_{k|k-1}}{\phi_{k}(1-\varsigma)}} \le |e_{k|k-1}| < \sqrt{\frac{4M\hat{z}_{k|k-1}}{\phi_{k}(M-1)(1-\varsigma)}}$. Given $\tilde \epsilon$, $\lambda_k$ and $\sigma_k$ as in \textit{Region 1}, in this case, we have
  \begin{equation*}
    -(1-\varsigma)\alpha_k V_{k}(e_{k|k-1}) + d_k |e_{k|k-1}| E[w_k]+ \xi_k^* \le 0.
  \end{equation*}
   It then follows from Lemma~\ref{Lem:Mstability} that
    \begin{multline}
    E[e_{k|k-1}^2] \le \frac{\overline p_k}{Q_0} E[{e_{1|0}}^2]\prod_{i=1}^{k-1} {(1-\alpha^*_{i})} \\
    + \overline p_k \sum_{i=1}^{k-2} \tau_{k-i-1}\prod_{j=1}^{i}{(1-\alpha^*_{k-j})}.
    \end{multline}
    Hence, when $k \to \infty$, $E[e_{k|k-1}^2]$ converges exponentially to $\overline p_k \sum_{i=1}^{k-2} \tau_{k-i-1}\prod_{j=1}^{i}{(1-\alpha^*_{k-j})}\sim \Theta(\hat{z}_{k|k-1})$ and it thus holds that $\frac{E[e_{k|k-1}]}{{z}_{k}} = \Theta(\frac{1}{\sqrt{{z}_{k}}}) \to 0$ for $z_k \to \infty$.
\item \textit{Region 3:} $0 \le |e_{k|k-1}| < \sqrt{\frac{4\hat{z}_{k|k-1}}{\phi_{k}(1-\varsigma)}}$. The circumstances in this region are very complicated due to $E[w_k]$ and $E[{w_k}^2]$, we here thus just consider the worst case that $E[w_k]=\lambda_k$ and $E[{w_k}^2]=\sigma_k$. Consequently, we have
    \begin{equation*}
    -(1-\varsigma)\alpha_k V_{k}(e_{k|k-1}) + d_k |e_{k|k-1}| E[w_k]+ \xi_k^* > 0,
  \end{equation*}
  and it then follows from Lemma~\ref{Lem:Mstability} that
    \begin{multline}
    E[e_{k|k-1}^2] \le \frac{\overline p_k}{Q_0} E[{e_{1|0}}^2]\prod_{i=1}^{k-1} {(1-\alpha^*_{i})} \\
    + \overline p_k \sum_{i=1}^{k-2} \tau_{k-i-1}\prod_{j=1}^{i}{(1-\alpha^*_{k-j})}.
    \end{multline}
  Hence, when $k \to \infty$, $E[e_{k|k-1}^2]$ converges at exponential rate to $\overline p_k \sum_{i=1}^{k-2} \tau_{k-i-1}\prod_{j=1}^{i}{(1-\alpha^*_{k-j})}\sim \Theta(\hat{z}_{k|k-1})$, and thus $\frac{E[e_{k|k-1}]}{{z}_{k}} \le \Theta(\frac{1}{\sqrt{{z}_{k}}}) \to 0$ for $z_k \to \infty$.\\

  Note that for the case that $E[w_k]<\lambda_k$ and $E[{w_k}^2]<\sigma_k$, the range of \textit{Region 3} will shrink and the range of \textit{Region 2} will largen.
\end{itemize}

Integrating the above three regions, we can get the similar results on the convergence of the expected estimation error $E[e_{k|k-1}]$ as in the static case.

\section{Numerical Analysis}
\label{sec:evaluation}

In this section, we conduct extensive simulations to evaluate the performance of the proposed tag population estimation algorithms by focusing on the relative estimation error denoted as $\left| \frac{z_k - \hat z_{k|k-1}}{z_k} \right|$. Specifically, we simulate in sequence both static and dynamic RFID systems in two scenarios where the initial tag population are $z_0=10^4$ (scenario 1) and $z_0=10^5$ (scenario 2) with the following parameters: $q=0.1$, $P_{0|0} =1$, $J=3$, $\theta=4$ and $\Upsilon=0.5$ with reference to~\eqref{Eq:theta} and~\eqref{Eq:Upsilon}, $\uline \phi = 0.25$ and $\overline \phi = 10$ such that~\eqref{Eq:phi_var} always holds. Since the proposed algorithms do not require collision detection, we set a slot to $0.4$ms as in the EPCglobal C1G2 standard~\cite{C1G22005}.

\subsection{Static System}

We evaluate the performance by varying initial relative estimation error as
\begin{itemize}
\item $\big|\frac{z_0-\hat z_{0|0}}{z_0}\big|=0.9$ means a large initial estimation error.
\item $\big|\frac{z_0-\hat z_{0|0}}{z_0}\big|=0.5$ means a medium initial estimation error.
\item $\big|\frac{z_0-\hat z_{0|0}}{z_0}\big|=0.2$ implies a small initial estimation error and satisfies~\eqref{Eq:epsilon_k} with $0.5 \le \varsigma <1$, .
\end{itemize}
The purpose of the first two cases is to investigate the effectiveness of the estimation in relative large initial estimation errors while the third case intends to verify the analytical results $\frac{\hat z_{0|0}}{z_0} >0.75$ as shown in~\eqref{Eq:epsilon_k}. In this part, the simulation runs $10$ rounds each time.

\begin{figure}[htbp]
\centering
\includegraphics[width=8cm]{./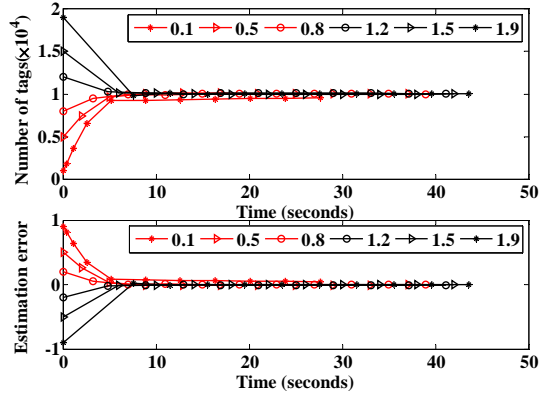}
\caption{Estimation for static tag population ($z_0=10^4$).}
\label{fig:1-C-4}
\end{figure}

\begin{figure}[htbp]
\centering
\includegraphics[width=8cm]{./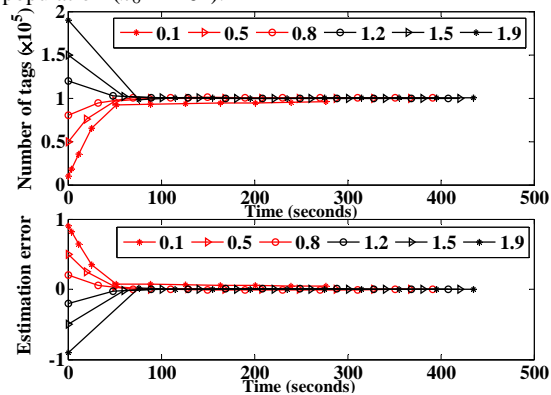}
\caption{Estimation for static tag population ($z_0=10^5$).}
\label{fig:1-C-5}
\end{figure}

Fig.~\ref{fig:1-C-4} and Fig.~\ref{fig:1-C-5} illustrates the estimation processes in the first two scenarios, with different initial estimation errors ($\hat z_{0|0}$ ranges from $1000$ to $19000$ with $z_0=10000$). As shown in the figures, the estimation $\hat z_{k|k-1}$ converges towards the actual number of tags within very short time in all the six cases, despite the initial estimation error. Moreover, the convergence time is shorter with smaller $\hat z_{0|0}$, i.e., under an under-estimation of the initial tag population. The reason is that from~\eqref{Eq:gain},~\eqref{Eq:dh} and~\eqref{Eq:Rk}, smaller $\phi_k$ and larger $\hat z_{k|k-1}$ lead to larger $K_k$, which increases the convergence rate.







\subsection{Dynamic system}

We simulate the case $\frac{z_0-\hat z_{0|0}}{z_0} = 0.5$ to evaluate the performance of the EKF-based estimator for dynamic systems. In the subsection, the tag population in scenario 1 varies in order of magnitude from $\sqrt {\hat z_{k|k-1}}$ to $0.4 \hat z_{k|k-1}$, while that in scenario 2 from $\sqrt {\hat z_{k|k-1}}$ to $0.5 \hat z_{k|k-1}$. Moreover, the tag population changes in different patterns during the simulation as shown in Fig.~\ref{fig:2-V-4} and Fig.~\ref{fig:2-V-5}.

\begin{figure}[htbp]
\centering
\includegraphics[width=8cm]{./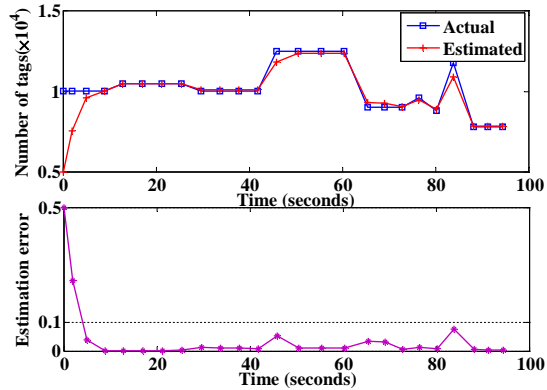}
\caption{Estimation for dynamic tag population ($z_0=10^4$).}
\label{fig:2-V-4}
\end{figure}

\begin{figure}[htbp]
\centering
\includegraphics[width=8cm]{./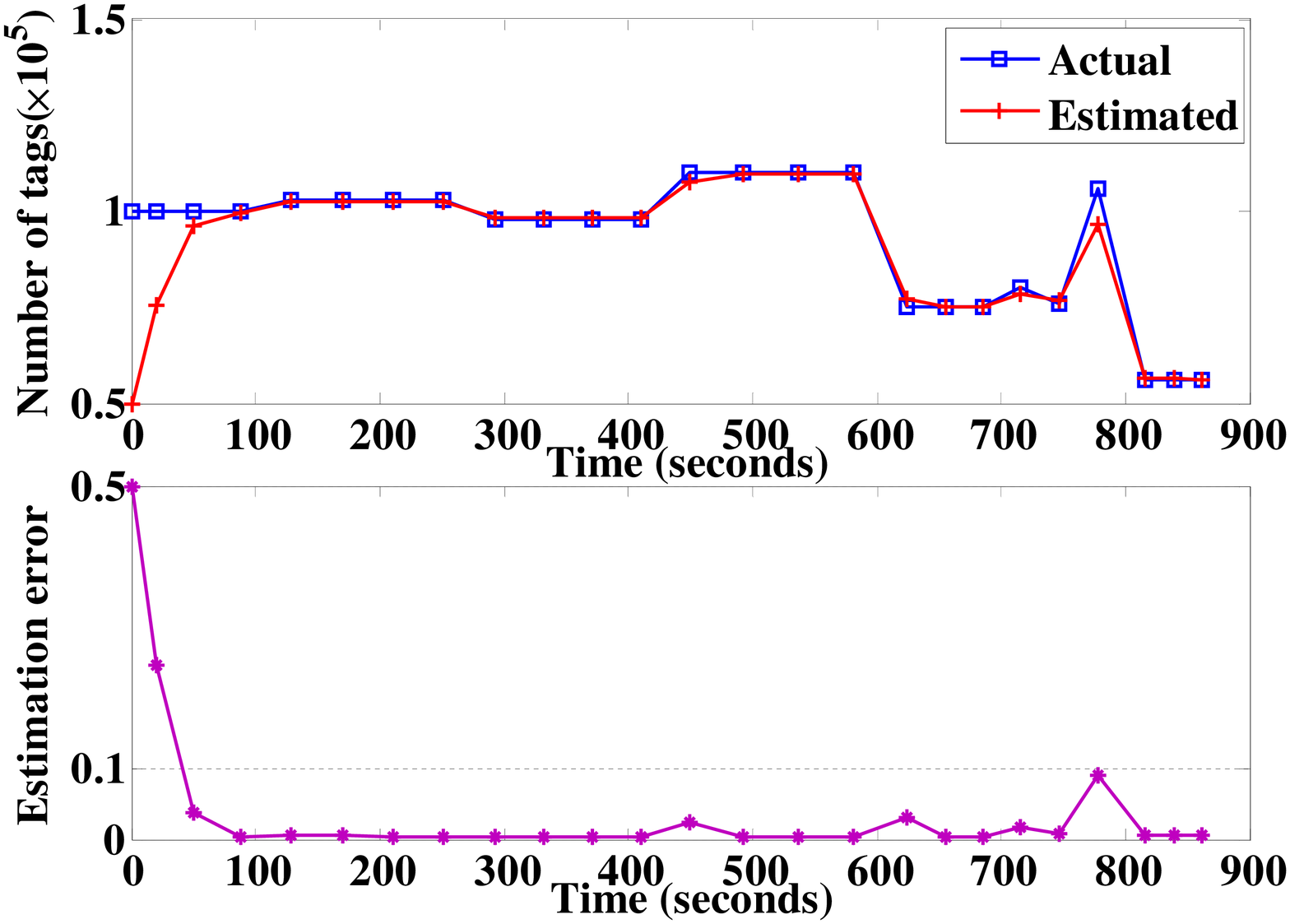}
\caption{Estimation for dynamic tag population ($z_0=10^5$).}
\label{fig:2-V-5}
\end{figure}
From Fig.~\ref{fig:2-V-4} and Fig.~\ref{fig:2-V-5}, we make the following observations. First, as derived in Theorem~\ref{Th:conditions-w}, the estimation is stable and accurate facing to a relative small population change, i.e., around the order of magnitude $\sqrt {\hat z_{k|k-1}}$. Second, the proposed scheme also functions nicely even when the initial estimation error is as high as $0.4 \hat z_{k|k-1}$ and $0.5 \hat z_{k|k-1}$ tags as shown in Fig.~\ref{fig:2-V-4} and Fig.~\ref{fig:2-V-5}, respectively. This is due to the CUSUM-based change detection which detects state changes promptly such that a small value is set for $\phi_k$, leading to rapid convergence rate. Third, it can be noted from the comparison between Fig.~\ref{fig:2-V-4} and Fig.~\ref{fig:2-V-5} that under the same load factor $\rho$, the estimation scheme used in the larger scale system is more accurate and stable. The main reason is that a frame in a large RFID system is much longer, which reduces the measurement variance $Var[u_k]$. Thus, the measurement is more accurate.


\section{Conclusion}
\label{sec:conclusion}

In this paper, we have addressed the problem of tag estimation in dynamic RFID systems and designed a generic framework of stable and accurate tag population estimation schemes based on Kalman filter.
Technically, we leveraged the techniques in extended Kalman filter (EKF) and cumulative sum control chart (CUSUM) to estimate tag population for both static and dynamic systems. By employing Lyapunov drift analysis, we mathematically characterised the performance of the proposed framework in terms of estimation accuracy and convergence speed by deriving the closed-form conditions on the design parameters under which our scheme can stabilise around the real population size with bounded relative estimation error that tends to zero within exponential convergence rate. In future work, we plan to use the theoretical framework developed in this work to address tag estimation problems with multiple readers with overlapping covered areas.

\bibliographystyle{abbrv}
\bibliography{reference_RFID_estimation}

\begin{thebibliography}{10}

\bibitem{basseville1993abruptdetection}
M.~Basseville, I.~V. Nikiforov, et~al.
\newblock {\em Detection of abrupt changes: theory and application}.
\newblock Prentice Hall Englewood Cliffs, 1993.

\bibitem{brodsky1993nonparametric}
E.~Brodsky and B.~S. Darkhovsky.
\newblock {\em Nonparametric methods in change point problems}.
\newblock Springer Science \& Business Media, 1993.

\bibitem{cha2006DFSA}
J.-R. Cha and J.-H. Kim.
\newblock Dynamic framed slotted aloha algorithms using fast tag estimation
  method for rfid system.
\newblock In {\em IEEE CCNC}, pages 768--772, 2006.

\bibitem{C1G22005}
{EPCglobal Inc.}
\newblock Radio-frequency identity protocols class-1 generation-2 {UHF} {RFID}
  protocol for communications at 860 mhz - 960 mhz version 1.0.9.
\newblock [Online], 2005.
\newblock Available:
  \url{http://www.gs1.org/gsmp/kc/epcglobal/uhfc1g2/uhfc1g2_1_0_9-standard-20050126.pdf}.

\bibitem{Finkenzelle2000}
K.~Finkenzelle.
\newblock {\em RFID handbook: Radio frequency identification fundamentals and
  applications}.
\newblock John Wiley \& Sons, 2000.

\bibitem{gustafsson2000CUSUM}
F.~Gustafsson and F.~Gustafsson.
\newblock {\em Adaptive filtering and change detection}.
\newblock Wiley New York, 2000.

\bibitem{han2010anonymously}
H.~Han, B.~Sheng, C.~C. Tan, Q.~Li, W.~Mao, and S.~Lu.
\newblock Counting {RFID} tags efficiently and anonymously.
\newblock In {\em IEEE INFOCOM}, pages 1--9. IEEE, 2010.

\bibitem{kodialam2006mobicom}
M.~Kodialam and T.~Nandagopal.
\newblock Fast and reliable estimation schemes in {RFID} systems.
\newblock In {\em ACM Mobicom}, pages 322--333. ACM, 2006.

\bibitem{kodialam2007anonymous}
M.~Kodialam, T.~Nandagopal, and W.~C. Lau.
\newblock Anonymous tracking using {RFID} tags.
\newblock In {\em IEEE INFOCOM}, pages 1217--1225. IEEE, 2007.

\bibitem{Kolchin1978allocation}
V.~F. Kolchin, B.~A. Sevastyanov, and V.~P. Chistyakov.
\newblock {\em Random allocation}.
\newblock Wiley New York, 1978.

\bibitem{lee2008supply}
C.-H. Lee and C.-W. Chung.
\newblock Efficient storage scheme and query processing for supply chain
  management using {RFID}.
\newblock In {\em ACM SIGMOD}, pages 291--302. ACM, 2008.

\bibitem{li2010energy}
T.~Li, S.~Wu, S.~Chen, and M.~Yang.
\newblock Energy efficient algorithms for the {RFID} estimation problem.
\newblock In {\em IEEE INFOCOM}, pages 1--9. IEEE, 2010.

\bibitem{morozan1972boundedness}
T.~Morozan.
\newblock Boundedness properties for stochastic systems.
\newblock In {\em Stability of Stochastic Dynamical Systems}, pages 21--34.
  Springer, 1972.

\bibitem{ni2011tracking}
L.~M. Ni, D.~Zhang, and M.~R. Souryal.
\newblock {RFID}-based localization and tracking technologies.
\newblock {\em IEEE Wireless Communications}, 18(2):45--51, 2011.

\bibitem{qian2011cardinality}
C.~Qian, H.~Ngan, Y.~Liu, and L.~M. Ni.
\newblock Cardinality estimation for large-scale {RFID} systems.
\newblock {\em IEEE Transactions on Parallel and Distributed Systems},
  22(9):1441--1454, 2011.

\bibitem{reif1999stochastic}
K.~Reif, S.~G{\"u}nther, E.~Yaz~Sr, and R.~Unbehauen.
\newblock Stochastic stability of the discrete-time extended {Kalman} filter.
\newblock {\em IEEE Transactions on Automatic Control}, 44(4):714--728, 1999.

\bibitem{DoD2004}
{RFID Journal}.
\newblock {DoD} releases final {RFID} policy.
\newblock [Online], 2004.
\newblock Available:
  \url{http://www.rfidjournal.com/article/articleview/1080/1/1}.

\bibitem{Wal-Mart2005}
{RFID Journal}.
\newblock {Wal-Mart} begins {RFID} process changes.
\newblock [Online], 2005.
\newblock Available: \url{http://www.rfidjournal.com/article/articleview/1385}.

\bibitem{DoD2007}
{RFID Journal}.
\newblock {DoD} reaffirms its {RFID} goals.
\newblock [Online], 2007.
\newblock Available:
  \url{http://www.rfidjournal.com/article/articleview/3211/1/1}.

\bibitem{rhudy2013online}
M.~B. Rhudy and Y.~Gu.
\newblock Online stochastic convergence analysis of the {Kalman} filter.
\newblock {\em International Journal of Stochastic Analysis}, 2013, 2013.

\bibitem{sarangan2008mobile}
V.~Sarangan, M.~Devarapalli, and S.~Radhakrishnan.
\newblock A framework for fast {RFID} tag reading in static and mobile
  environments.
\newblock {\em Computer Networks}, 52(5):1058--1073, 2008.

\bibitem{shahzad2012everybit}
M.~Shahzad and A.~X. Liu.
\newblock Every bit counts: fast and scalable {RFID} estimation.
\newblock In {\em ACM Mobicom}, pages 365--376, 2012.

\bibitem{song1992EKF}
Y.~Song and J.~W. Grizzle.
\newblock The extended {Kalman} filter as a local asymptotic observer for
  nonlinear discrete-time systems.
\newblock In {\em American Control Conference}, pages 3365--3369. IEEE, 1992.

\bibitem{spiring2007quality}
F.~Spiring.
\newblock Introduction to statistical quality control.
\newblock {\em Technometrics}, 49(1):108--109, 2007.

\bibitem{tarn1976observers}
T.-J. Tarn and Y.~Rasis.
\newblock Observers for nonlinear stochastic systems.
\newblock {\em IEEE Trans. Automatic Control}, 21(4):441--448, 1976.

\bibitem{xiao2013differential}
Q.~Xiao, B.~Xiao, and S.~Chen.
\newblock Differential estimation in dynamic {RFID} systems.
\newblock In {\em IEEE INFOCOM}, pages 295--299. IEEE, 2013.

\bibitem{xiao2015temporally}
Q.~Xiao and M.~C. S. C.~Y. Zhou.
\newblock Temporally or spatially dispersed joint rfid estimation using
  snapshots of variable lengths.
\newblock In {\em ACM MobiHoc}, pages 247--256. ACM, 2015.

\bibitem{xie2010mobile}
L.~Xie, B.~Sheng, C.~C. Tan, H.~Han, Q.~Li, and D.~Chen.
\newblock Efficient tag identification in mobile {RFID} systems.
\newblock In {\em IEEE INFOCOM}, pages 1--9. IEEE, 2010.

\bibitem{yang2013localization}
P.~Yang, W.~Wu, M.~Moniri, and C.~C. Chibelushi.
\newblock Efficient object localization using sparsely distributed passive
  {RFID} tags.
\newblock {\em IEEE Transactions on Industrial Electronics}, 60(12):5914--5924,
  2013.

\bibitem{zheng2013zoe}
Y.~Zheng and M.~Li.
\newblock Zoe: Fast cardinality estimation for large-scale {RFID} systems.
\newblock In {\em IEEE INFOCOM}, pages 908--916. IEEE, 2013.

\end{thebibliography}

\end{document}